\documentclass[12pt]{article}
%twocolumn
\usepackage{TpackTron}
\title{\tron, a combinatorial Game on abstract Graphs}
\author{Tillmann Miltzow}
%\date{}

\begin{document}
\maketitle

\begin{abstract} 
	We study 
	the combinatorial two-player
	game \tron. We answer the 
	extremal question on general
	graphs and also consider smaller
	graph classes. 
%	Bodlaender showed in
%	\cite{DBLP:journals/tcs/Bodlaender93} 
%	PSPACE-completeness for \tron\ on directed graphs.
	Bodlaender and Kloks conjectured in \cite{TronOnTrees}
	PSPACE-completeness.
	We proof this conjecture.
\end{abstract}

\section{Introduction}

%	\begin{figure}[h]
%	\begin{center}
%	\includegraphics{Bilder/tronLightCyclesWikipedia}
%	\end{center}
%	\end{figure}
	The movie Tron from 1982 inspired the computer game Tron \cite{TronWebseite}.
	The game is played in a rectangle
	by two light cycles or 
	motorbikes, which try to cut each other off,
	so that one, eventually has  to hit a wall
	or a light ray.
	We consider a natural abstraction of the game, 
	which we define as follows:
	Given an undirected graph $G$, two opponents 
	play in turns. The first player (\alice )
	begins by picking a start-vertex of $G$ and
	the second player (\bob ) proceeds 
	by picking a different start-vertex for himself. 
	Now \alice\ and \bob\ take turns, by moving 
	to an adjacent vertex from their respective previous one 
	in each step.
	While doing that it is forbidden to
	reuse a vertex, which was already traversed
	by either player.
	The game ends when both players
	cannot move anymore.
	The competitor who traversed more
	vertices wins.
	Tron can be pictured with
	two snakes, which eat up pieces of
	a tray of cake, with the restriction that
	each snake only eats adjacent pieces and starves 
	if there is no adjacent piece left for her.
	We assume that both contestants 
	have complete information at all times.
	
%	When one is searching the literature
%	about combinatorial games it is 
%	inevitable to run across Conway et al.,
%	who introduced in \cite{wwfymp} many 
%	combinatorial games and some deeper theory
%	about them.
	
	Bodlaender is the one who first introduced 
	the game to the science community, and according to him 
	Marinus Veldhorst proposed to study \tron.
	Bodlaender showed PSPACE-completeness for \emph{directed}
	graphs with and without given start positions \cite{DBLP:journals/tcs/Bodlaender93}. 
	Later, Bodlaender and Kloks showed that
	there are fast algorithms for \tron\ on trees \cite{TronOnTrees}
	and NP-hardness and co-NP-hardness for \emph{undirected} graphs.

	We have two kind of results. On the one hand
	we investigated by how much \alice\ or 
	\bob\ can win at most. It turns out, that
	both players can gather all the vertices
	except a constant number in particular
	graphs. This results still holds for
	$k$-connected graphs. For planar
	graphs, we achieve a weaker,
	but similar result.
	We also investigated the 
	computational complexity question.
	we showed PSPACE-completeness for
	\tron\ played on \emph{undirected} graphs
	both when starting positions are given
	and when they are not given.

%	Section \ref{basObs} introduces some game properties
%	and makes some necessary definitions.
%	Section \ref{extrQuest} investigates the question,
%	by how much \alice\ resp. \bob\ 
%	can win and Section \ref{complQuest} will show the 
%	PSPACE-completeness.
	
	Many proofs require some tedious case analysis. 
	We therefore believe that thinking about the cases 
	before reading all the details will
	facilitate the process of understanding.
	To simplify matters, we neglected constants
	whenever possible.
	
%	The authors believe, that a further investement
%	of the game might motivate the definition of interesting graph
%	properties, which could lead to an FTP(what to cite here?) for
%	Tron. This will be made more precise in section ... .
\section{Basic Observations}\label{basObs}
	The aim of this section is to show some
	basic characteristics and introduce some notation, 
	so that the reader has the opportunity to become familiar 
	with the game.
	
	\begin{definition}
		Let $G$ be a graph, and \alice\ 
		and \bob\ play one game of \tron \ 
		on $G$. Then we denote with
		$\alicewin$ the number of vertices
		\alice\ traversed and with $\bobwin$
		the number of vertices \bob \ traversed
		on $G$. The outcome of the game 
		is $\bobwin / \alicewin$.
		We say \bob\ wins iff $\bobwin > \alicewin$,
		\alice\ wins iff $\alicewin > \bobwin$
		and otherwise we call the game a tie.
		We say \bob\ plays \emph{rationally}, if 
		his strategy maximizes the outcome
		and we say \alice\ plays \emph{rationally}
		if her strategy minimizes the outcome.
		We further assume that \alice\ and \bob\ 
		play always rational.
	\end{definition}
	
		Here we differ slightly from \cite{DBLP:journals/tcs/Bodlaender93},
		where \alice\ loses if both players receive
		the same amount of vertices.
		We introduce this technical nuance,
		because it makes more sense in regard of 
		the extremal question and is
		not relevant for the complexity question.
	
		Now when you play a few games of \tron\ on a "random"
		graph, you will observe that you will usually end up in 
		a tie or you will find that one of the players
		made a mistake during the game.
		So a natural first 
		question to ask is if \alice\ or
		\bob\ can win by more than one at all.
	
	\begin{example}[two paths]
		\label{twopath}
		Let $G$ be a graph which consists of two paths
		of  length $n$. 
		On the one hand, \alice\ could start close to the middle of
		one of the paths, then \bob\ starts at
		the beginning of the other path, and thus
		wins. On the other hand if \alice\ 
		tries to start closer to the
		end of a path \bob\ will cut her
		off by starting next to
		her on the longer side of her path.
		The optimal solution
		lies somewhere in between and 
		a bit of arithmetic reveals that
		for the optimal choice $\bobwin / \alicewin$ tends to $2/(\sqrt{5}-1)$
		as $n$ tends to infinity.
	\end{example}
	
	And what about \alice ? We will modify 
	our graph above by adding a super-vertex $v$
	 adjacent to every vertex of
	$G$. Now when \alice\ starts there
	the first vertex on $G$ will be taken
	by \bob\ and \alice\ will take the second
	vertex on $G$. So we see that the roles of
	\alice \ and \bob\ have interchanged. 
	
	\begin{lemma}[Super-vertex]\label{lem:supervertex}
		Let $G$ be a graph where 
		$\left(\bobwin\right/ \alicewin)_G > 1 $ and
		$F$ be the graph we obtain by
		adding a super-vertex $v$ adjacent to
		every vertex of $G$.
		It follows that
		$$\left(\frac{\alicewin}{\bobwin }\right)_{F} \geq  \left(\frac{\bobwin }{\alicewin}\right)_{G}.$$
	\end{lemma}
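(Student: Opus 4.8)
The plan is to hand \alice\ one explicit strategy on $F$ and show it already forces the inequality; since \alice\ plays rationally (she minimises the outcome $\bobwin/\alicewin$), the true value of the game on $F$ can only be at least as good for her as whatever this particular strategy achieves. The guiding idea is exactly the one previewed in the text: by opening on the super-vertex $v$, \alice\ shifts the entire alternation on $G$ by one half-move, so that, restricted to $G$, she is effectively sitting in \bob's seat and \bob\ in \alice's. Set $R := (\bobwin/\alicewin)_G$, which is $>1$ by hypothesis.

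Concretely, I would let \alice\ start at $v$. \bob\ must then answer with some vertex $b_0 \in G$, since every vertex of $G$ is still free. From that point on \alice\ reads the ongoing play restricted to $G$ as a game of \tron\ on $G$ whose opening move was $b_0$, and she copies an optimal \bob-strategy for that $G$-game: her first move out of $v$ goes where an optimal \bob\ would place his start in reply to $b_0$, and each later move of hers answers \bob's preceding move exactly as an optimal \bob\ would answer the corresponding \alice-move on $G$. Because an optimal \bob-strategy is defined against \emph{every} \alice-line, this prescription remains well defined no matter how \bob\ actually plays on $F$.

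The first thing to verify is that this simulation is faithful. I would check that the order in which the vertices of $G$ get claimed in the $F$-play is $b_0, a_1, b_1, a_2, \dots$, i.e.\ \bob\ (in the \alice-role) always moves first in each round and \alice\ (in the \bob-role) responds — precisely the move order of an honest $G$-game. Since $v$ is occupied at once and can never be re-entered, the set of occupied $G$-vertices agrees at every stage in the two games, so a move is legal on $F$ iff the corresponding move is legal on $G$; in particular \alice\ can always realise the reply her strategy prescribes, and the two plays halt simultaneously. Writing $\alpha$ for the number of $G$-vertices \bob\ ends up taking on $F$ and $\beta$ for the number \alice\ takes in $G$, the projected $G$-game has outcome $\beta/\alpha$, while on $F$ we have that \bob's total is $\alpha$ and \alice's total is $\beta+1$ (the extra vertex being $v$). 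Because \alice\ copies an optimal \bob-strategy and a finite perfect-information game has a value the maximiser can guarantee against any opponent, $\beta/\alpha \ge R$ holds regardless of \bob's play, so
$$\frac{\beta+1}{\alpha} \;>\; \frac{\beta}{\alpha} \;\ge\; R.$$
Thus this strategy guarantees outcome $\bobwin/\alicewin = \alpha/(\beta+1) \le 1/R$ against every \bob-response, whence the rational value obeys $(\bobwin/\alicewin)_F \le 1/R$, equivalently $(\alicewin/\bobwin)_F \ge R = (\bobwin/\alicewin)_G$.

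The step I expect to be the main obstacle is the faithfulness of the simulation: one must argue with care that the one-step shift puts \bob\ in the \alice-seat at \emph{every} round, not merely the first, and that the always-adjacent-but-forbidden vertex $v$ never hands \bob\ a move with no $G$-counterpart. Once the occupied-set invariant is pinned down, the counting and the appeal to the game value are routine; the informal phrase ``$\beta/\alpha \ge R$ against any opponent'' should be backed by the existence of an optimal (maximin) \bob-strategy for the $G$-game, which is what makes the guarantee hold uniformly over \bob's choices on $F$.
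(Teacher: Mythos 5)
Your proof is correct and follows exactly the idea the paper itself uses: the paper offers no formal proof of this lemma, only the preceding remark that when \alice\ opens on the super-vertex the roles of \alice\ and \bob\ interchange on $G$, which is precisely the role-swap simulation you formalize. Your write-up fills in the details the paper leaves implicit (the occupied-set invariant, the legality of simulated moves, and the appeal to a maximin strategy guaranteeing $\beta/\alpha \geq (\bobwin/\alicewin)_G$), and the counting step $(\beta+1)/\alpha > \beta/\alpha$ is sound.
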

	
	So lemma \ref{lem:supervertex} simplifies matters.
	Once we have found a good graph for 
	\bob\ we have automatically found a good graph
	for \alice. But the other direction
	holds as well. Let $G$ be a graph where
	\alice\ wins and let us say she starts at
	vertex $v$. Delete vertex $v$ from $G$ to 
	attain $H$. Now the situation in
	\alice's first move in $H$ is the same as 
	\bob's first move in $G$. And \bob's first
	move in $H$ includes the options \alice\ had
	in her second move in $G$.
	\begin{lemma}
		Let $G$ be a graph where $\left(\bobwin\right/ \alicewin)_G < 1$
		and $H$ be the graph we obtain by deleting
		the vertex where \alice\ starts. It follows that
		$$\left(\frac{\bobwin + 1}{\alicewin}\right)_{H} \geq  \left(\frac{\alicewin}{\bobwin }\right)_{G}.$$
	\end{lemma}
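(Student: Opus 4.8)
The plan is to exhibit an explicit strategy for \bob\ on $H$ that simulates \alice's optimal play on $G$, so that the deleted vertex $v$ is accounted for exactly by the ``$+1$''. Fix \alice's optimal start $v$ and her optimal strategy $\sigma$ on $G$. Since $(\bobwin/\alicewin)_G$ is the min-max value and $\sigma$ minimizes it, $\sigma$ guarantees $\alicewin/\bobwin \geq (\alicewin/\bobwin)_G$ against \emph{every} play of \bob. Write \alice's trajectory on $G$ as $v=a_0,a_1,a_2,\dots$ and \bob's as $b_0,b_1,b_2,\dots$, recalling the turn order: \alice\ places $a_0$, \bob\ places $b_0$, then they alternate moves starting with \alice. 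Since \alice\ wins she owns at least two vertices while \bob\ owns at least one, so $\alicewin\geq 2$ and $a_1$ exists.

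First I would set up the role-reversing correspondence between a play on $H$ and a shadow play on $G$ given by $a'_i=b_i$ and $b'_i=a_{i+1}$, where primes denote the $H$-trajectories of \alice\ and \bob. Concretely, \bob's strategy on $H$ is this: after \alice\ places her start $a'_0$ on $H=G-v$, interpret it as \bob\ placing $b_0:=a'_0$ in a shadow $G$-game in which \alice\ sits at $v$, and respond by placing at $a_1$, the vertex $\sigma$ prescribes for \alice's first move; thereafter, whenever \alice\ moves $b_{i-1}\to b_i$ on $H$, feed this in as \bob's shadow move and reply with $\sigma$'s move $a_i\to a_{i+1}$. The turn order matches with a shift of one (each $H$-step mirrors the next $G$-step), the only $G$-action without a counterpart being \alice's placement of $v$.

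The bulk of the work is checking that this correspondence is a legal bijection that respects termination. I would verify: (i) every move \alice\ makes on $H$ maps to a legal \bob-move in the shadow game, since her vertices lie in $G-v$ and stay disjoint from \bob's, and conversely each response $a_i\to a_{i+1}$ of $\sigma$ is legal on $H$ because $a_{i+1}\neq v$ and its edge avoids $v$; (ii) a token is stuck on $H$ exactly when its shadow counterpart is stuck on $G$, because the only neighbour possibly lost to $H$ is $v$, which is already occupied by \alice\ in the shadow game and hence never an available move there. Thus the $H$-play is an exact copy of the shadow play with $v$ removed: \alice\ on $H$ collects \bob's shadow vertices $\{b_0,b_1,\dots\}$, and \bob\ on $H$ collects \alice's shadow vertices except $v$, namely $\{a_1,a_2,\dots\}$.

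Finally I would cash this out numerically. If the shadow play awards \alice\ $\alpha'$ and \bob\ $\beta'$ vertices on $G$, then on $H$ we have $\alicewin=\beta'$ and $\bobwin=\alpha'-1$, whence $(\bobwin+1)/\alicewin=\alpha'/\beta'\geq(\alicewin/\bobwin)_G$, the inequality being exactly the guarantee of $\sigma$ against \alice's (arbitrary) play on $H$. As this holds for every strategy of \alice, \bob\ can force it, which is the claimed lower bound on the value. I expect the main obstacle to be the careful bookkeeping in step (ii): ensuring that the one-step shift and the placement-versus-move asymmetry at $v\mapsto a_1$ neither desynchronise the two games nor create an illegal move, together with the degenerate cases in which a token gets stuck almost immediately.
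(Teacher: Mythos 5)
Your proposal is correct and takes essentially the same approach as the paper: the paper justifies this lemma only by the informal remark preceding it (\alice's first move in $H$ is the situation of \bob's first move in $G$, and \bob's first move in $H$ includes \alice's second-move options in $G$), which is precisely the role-reversing simulation you formalize. Your shifted correspondence $a'_i = b_i$, $b'_i = a_{i+1}$, with the deleted start vertex $v$ accounting for the ``$+1$'', is a faithful and more careful elaboration of that sketch.
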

		Note that the starting vertex of \alice\
		need not be unique, even when
		\alice\ wins.
		To see this consider the complete
		graph with an odd number of vertices.
	
		Bodlaender and Kloks \cite{TronOnTrees} 
		showed the first equation of lemma \ref{lem:tree}
		in theorem 3.1.
	
	\begin{lemma}[Trees]\label{lem:tree}
		Let $T$ be a tree then $\alicewin \leq \bobwin + 1$ 
		and $\bobwin   \leq 2 \cdot \alicewin$
	\end{lemma}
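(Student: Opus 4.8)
The plan is to treat the two inequalities separately, in each case by handing one of the players an explicit strategy that forces the desired bound. The structural fact I would use throughout is that on a tree every legal trajectory is a \emph{simple path} (a non-repeating walk in a tree has no choice but to be one), and that the moment a player occupies a vertex $v$, that vertex becomes a wall: a player sitting in one component of $T-v$ can never reach another, because every connection between the components runs through the occupied $v$. Both bounds fall out of applying this wall property to a cleverly chosen occupied vertex.

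For $\alicewin \le \bobwin + 1$ (the inequality already established in \cite{TronOnTrees}) I would record a short self-contained argument. Whatever vertex $a$ \alice\ starts on, let $C_1$ be a subtree hanging off $a$ of maximal depth $h_1$, and let $b_1$ be the neighbor of $a$ inside $C_1$; \bob\ answers with $b_1$. Now $a$ and $b_1$ are both occupied, the edge $ab_1$ is severed, and the two players live in disjoint subtrees with no further interaction. \alice\ is confined to $T$ minus $C_1$ and collects at most $1+h_2$ vertices, where $h_2 \le h_1$ is the second largest subtree depth at $a$, while \bob\ walks to the bottom of $C_1$ and collects $h_1$ vertices. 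Hence $\alicewin \le 1 + h_2 \le h_1 + 1 = \bobwin + 1$. Since \bob\ maximizes the outcome he has no reason to let \alice\ win by a larger margin, so the bound survives rational play.

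The real work is $\bobwin \le 2\alicewin$, and here I would hand \alice\ a strategy built on a diameter. Fix a diameter $d_0 d_1 \cdots d_L$ of $T$ and let \alice\ start at its center $d_m$ with $m=\lfloor L/2\rfloor$. By the wall property, as soon as \alice\ occupies $d_m$, \bob's start lies in one single component of $T-d_m$ and \bob\ can never cross to any other. Since each of the two diameter arms lies in a distinct component, at least one arm is entirely \bob-free; \alice\ opens by moving into such a free arm and traverses it completely. The shorter arm alone holds $m$ vertices, so this guarantees $\alicewin \ge m+1$. On the other hand \bob's trajectory is a simple path of the tree and hence no longer than a longest path, so $\bobwin \le L+1$. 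Combining, $\bobwin \le L+1 \le 2\lfloor L/2\rfloor + 2 = 2(m+1) \le 2\alicewin$. This survives rational play cleanly: \alice's strategy forces outcome at most $2$ against \emph{every} \bob, and since \alice\ minimizes the outcome the game value is at most $2$, i.e. $\bobwin \le 2\alicewin$ at the rational outcome.

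The step I expect to be the main obstacle is the separation claim in the second inequality. One must argue carefully that occupying the center really does quarantine \bob\ to one side for the \emph{entire} game, so that no interleaving of the moves ever lets \bob\ creep onto the arm \alice\ is clearing, and that it is precisely the \emph{diameter} center — rather than a centroid or another notion of center — that forces the free arm to be at least half as long as the longest path \bob\ could ever trace. Pinning down that ``at least one arm is fully free and \alice\ reaches its far end unobstructed'' against all of \bob's adaptive play is where the argument needs the most care.
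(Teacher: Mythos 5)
Your proposal is correct and follows essentially the same route as the paper: for $\alicewin \leq \bobwin + 1$ you have \bob\ start on \alice's neighbor toward the deepest subtree, and for $\bobwin \leq 2\alicewin$ you have \alice\ start at the middle of a longest path, confining \bob\ to one component while she clears a half of that path — exactly the paper's two strategies, with the separation ("wall") argument and the arithmetic spelled out in more detail.
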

	\begin{proof}
			The idea of the proof is
			to describe a strategy for \alice\ and \bob\ explicitly.
			Let $v$ denote the starting vertex of Alice with 
			$w_1,\ldots, w_k$ its neighbors and
			$l_i$ the length of the longest path from
			$w_i$ in $T-\{v\}$ for all $i=1,\ldots,k$.
			Further, we denote with $j$ an index which
			satisfies $l_j = \max_{i=1,\ldots,k}{l_i}$.
			If \bob \ chooses $w_j$ as start-vertex and
			thus obtains at least $l_j$, while \alice\ receives
			at most $l_j + 1$.
			
			For the second inequality, let \alice\ start in the middle
			of a longest path. Thus she divides the tree into
			smaller trees $T_1,\ldots,T_k$. \bob\ chooses
			one of them and receives at most as many vertices 
			as the length of longest path in $T$.
			\alice\ will enter a different tree,
			which contains one half of the longest path.
			So she receives at least half of the longest
			path.
	\end{proof}

\section{Extremal Question}\label{extrQuest}
	In this section we want to answer the extremal question
	for \tron. That is: Is there a non-trivial
	upper bound on $\bobwin/\alicewin$ as a function 
	of the number of vertices of $G$?
	The answer is surprisingly no. For every natural number $n$ exists
	a graph with $n$ vertices, such that $\bobwin = n-c$ for some constant $c$.
	The idea behind it is fairly easy. Think of two
	motorcycles on a highway. 
%	And one of the 
%	motorcycles decides to leave the highway to
%	go to a smaller countryroad, where it is obviously
%	considerably slower(or at least it is supposed to be).
%	This gives the other person the opportunity to travel
%	around her on the highway and thus cut her off.
	If one of the motorcycles pushes the other from 
	the highway to the smaller and slower country roads, 
	it can encircle the other using the highway and 
	traverse the rest of the map comfortably alone.
	
	It seems convenient to study a simpler
	example first.
	
	\begin{example}[big-circle] \label{bigcircle}
	In this example we consider a cycle 
	of length $n$, and a subtle change of the 
	rules. We assume, that \alice\ has to make
	two moves before \bob\ joins the game.
	Now the analysis of this example is short:
	\alice\ decides for a vertex and a direction
	and \bob\ can simply start in front of her
	and take the rest of the circle.
	This example will also work, if only every $100^{th}$
	vertex is   an admissible start-vertex for \bob .
	\end{example}
	
	\begin{example}[visage]
		This example consists of three parts:
		an overhead graph, a big-circle and a bottleneck
		as depicted on the left-hand side of figure \ref{fig:VisageTillTorsten}. 
		The overhead graph can be any graph where \bob\ 
		wins. The two paths in example \ref{twopath}
		give us such a graph. It suffices to take paths of length
		$3$ for our purpose. 
		The big-circle consists of a large cycle
		of length $4*l$. The last part is a bottleneck
		between the first two parts and consists of two
		singular vertices. The bottleneck is
		adjacent to every vertex of the overhead graph
		but only to every fourth vertex 
		of the cycle. Alternating between the
		two bottleneck vertices.
		\begin{figure}[h]
			\begin{center}
				\includegraphics[width = 0.9\textwidth]{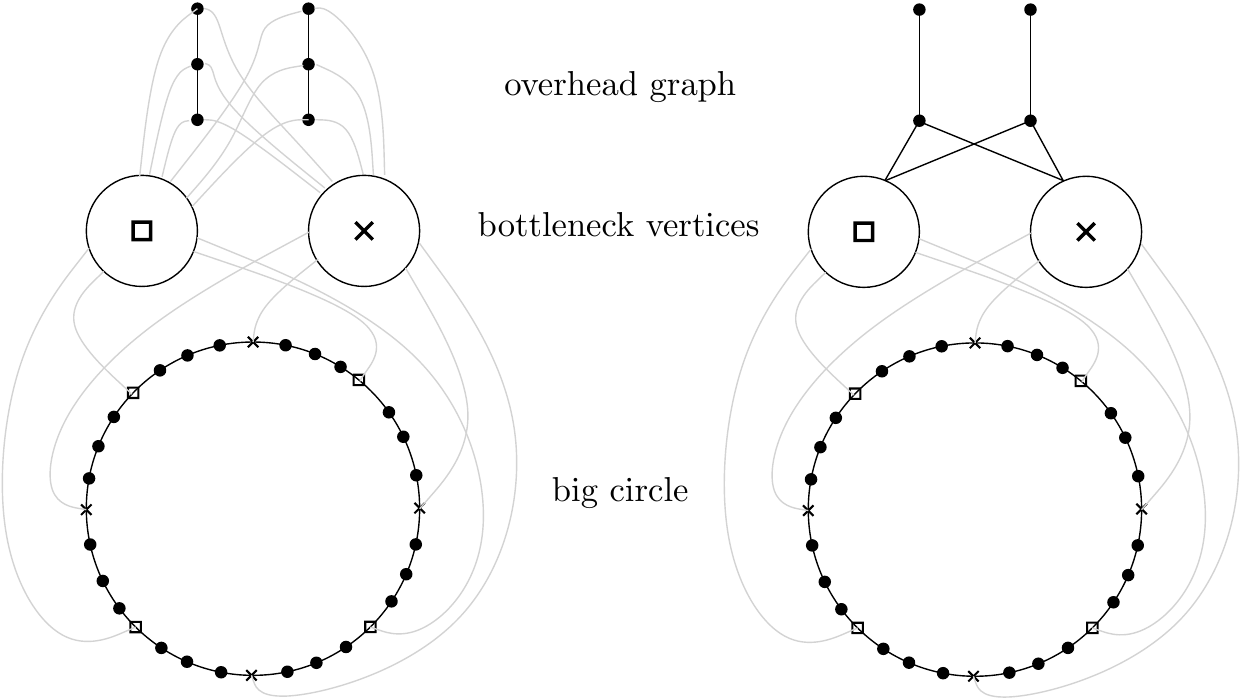}
				\caption{The ordinary visage on the left-hand side and the visage from Torsten Ueckerdt
				on the right-hand side.}
				\label{fig:VisageTillTorsten}
			\end{center}
		\end{figure}
		\bob\ has a strategy to 
		gather $n-c$ vertices,
		where $n$ denotes the total
		number of vertices in the visage 
		and $c$ some constant.
	\end{example}
	\begin{proof}
		We will give a strategy for \bob\
		for all possible moves of \alice.
		
		\noindent
		\texttt{Case1}
				\alice\ starts in the 
				overhead graph. 
				\bob\ will then also start in the overhead graph
				and win within the overhead graph. 
				So \alice\ has to leave  
				the overhead graph eventually
				and go to one of the
				bottleneck-vertices. 
				\bob\ waits one more turn 
				within the overhead graph.
				If \alice\ tries to go 
				back to the overhead graph,
				\bob\ will go to the other bottleneck-vertex
				and trap her.
				Thus \alice\ will have to go to the
				big-circle and once there she will
				have made already
				two turns when \bob\ enters the
				big-circle. We already studied this situation
				in example \ref{bigcircle}.
				
			\noindent
			\texttt{Case2}
				\alice\ starts in one of the bottleneck
				vertices. \bob\ will then again
				start somewhere in the overhead 
				graph. The situation is as in case 1.
			
			\noindent		
			\texttt{Case3}
				At last we consider the case where \alice\
				starts in the big-circle. In this case, 
				\bob\ will start 
				on the closest bottleneck-vertex to \alice\
				and then quickly go to the other bottleneck-vertex via 
				the overhead graph. Thus she cannot
				leave the big-circle. Finally  he enters the
				big-circle and cuts her off.
	\end{proof}

		\begin{example}[Torsten-visage]
		Torsten Ueckerdt showed
		with a very similar construction how 
		to reduce $c$ to $8$.
		As depicted on the right hand side of figure 
		\ref{fig:VisageTillTorsten}.
		Here not every vertex
		of the overhead graph is connected
		to the bottleneck.
		We want to point out, that
		the crossing number is
		two, so the graph is almost planar.
		We omit the proof as it does not
		involve any new ideas.
	\end{example}

		In addition, lemma \ref{lem:supervertex} gives us a graph
		where \alice\ can obtain all vertices except a constant
		amount.
		The natural question is, for which graph classes
		is this kind of construction possible. 
		Which graph classes should we consider?
		
		Very interesting is always the
		planar case, because planar graphs are very well
		studied and very close to the original game. 
		We will show that we have graphs
		with $\bobwin/\alicewin = \Theta(\sqrt{n})$ and 
		graphs with $\alicewin/\bobwin = \Theta(\sqrt{n})$.
		
		In the visage, \alice\'s first move
		on the big-circle gives her a direction and she
		cannot reconsider. This was the key
		for \bob\ to be able to cut her off. In a highly connected graph,
		we expect intuitively that \alice\ has
		enough freedom to avoid to get imprisoned.
		This motivates us to study $k$-connected graphs.
		Surprisingly, we are able to
		adapt the 
		visage so that it becomes 
		arbitrarily highly connected.

		\subsection{Planar Graphs}
		
		We construct a planar visage. Again
		it will be more convenient to
		study a simpler example first.
		\begin{example}[long-path]\label{ex:longPath}
			We consider now a path of length $n$, again
			with the subtle change of rules, that \alice\ has
			to make two moves before \bob\ starts.
			Now let us say, that \alice\ goes
			along the path and has $x$ vertices she might 
			be able to reach.
			Well, if \bob\ cuts her off 
			the outcome will be 
			$\bobwin/\alicewin = \Theta(x)$.
			So \alice\ could choose $x$ to be fairly small.
			But then \bob\ would just choose the other
			side and the outcome would become
			$\bobwin/\alicewin = \Theta(n/x)$.
			This observation implies
			that \alice\ best choice is to choose $x = \Theta(\sqrt{n})$ and
			thus the value of the game is $\Theta(\sqrt{n})$.
		\end{example}
		
			\begin{figure}[h]
				\begin{center}
					\includegraphics[width = 0.9\textwidth]{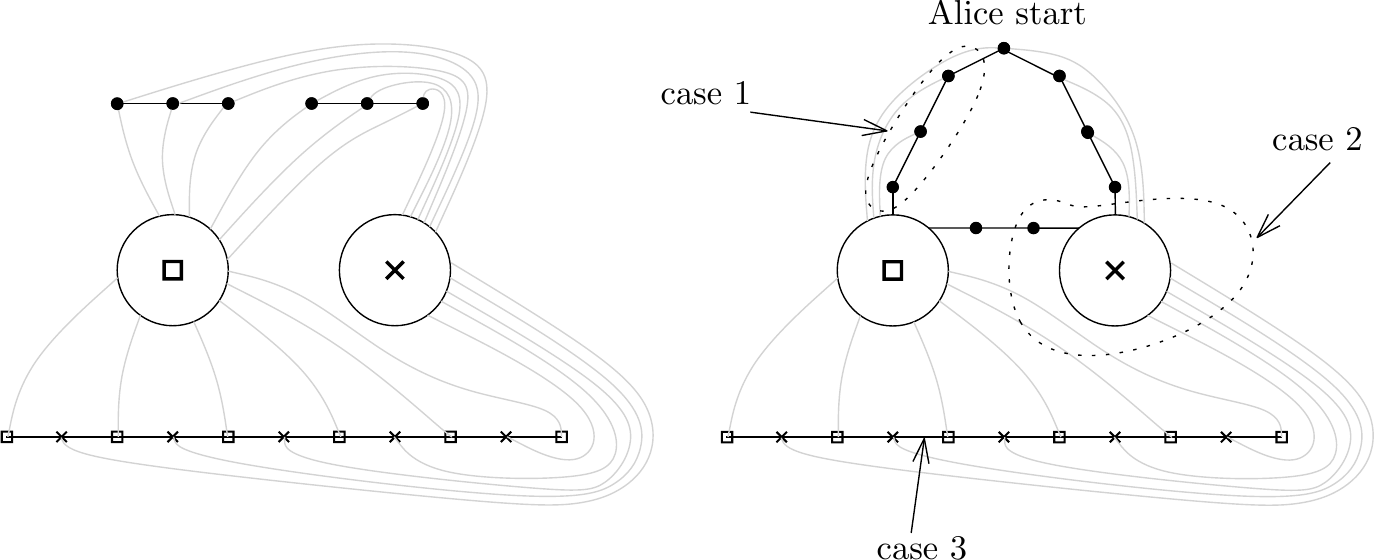}
					\caption{planar visage for \bob }
					\label{fig:planarVisageAliceBob}					
				\end{center}	
			\end{figure}
		
		The reader might already guess the planar visage
		we present now.
		\begin{example}[planar visage]
			The planar visage differs from the visage,
			only in one point, namely the big-circle
			is replaced by a long path. We can see from
			the drawing on the left-hand side of figure \ref{fig:planarVisageAliceBob},
			that this graph is planar.
			It is clear that the value is $\Theta(\sqrt{n})$
		\end{example}
		
		Unfortunately we cannot apply
		lemma \ref{lem:supervertex} to obtain
		a planar graph where \alice\ wins.
		Even if we add a super-vertex which 
		is only connected, to wisely chosen 
		vertices. Instead we construct a
		new overhead graph.
		
		\begin{example}[planar visage for \alice ]
			This example is different from the previous one
			in two ways. Obviously the overhead 
			graph has changed, as you can 
			see on the left had side of 
			figure \ref{fig:planarVisageAliceBob}, 
			but more subtly the distance 
			between vertices adjacent to the
			bottleneck increased  from $4$ to $7$.
			We claim that $\alicewin/\bobwin = \Theta(\sqrt{n})$
		\end{example}
		
	\begin{proof}
		We give an explicit strategy for
		\alice .
		\alice 's start-vertex is marked 
		in figure \ref{fig:planarVisageAliceBob}.
		
		\noindent
		\texttt{Case 1}
				\bob\ starts somewhere on a path
				from \alice 's start-vertex to a bottleneck-vertex.
				\alice\ can just go directly
				to the corresponding bottleneck-vertex and
				trap \bob\ this way.
		
		\noindent
		\texttt{Case 2} \bob\ starts w.l.o.g. on 
				the $\times$-vertex or
				on the adjacent vertex on the path between
				the two bottleneck vertices. 
				\alice\ will slowly go
				to the $\Box$-vertex, but hurry up, if
				\bob\ shows any sign that
				he wants to go in that direction as well.
		
		\noindent
		\texttt{Case 3} The only remainung case is where \bob\ starts
				on the long path.
				\alice\ will go to the bottleneck-vertex,
				which is closer to \bob\ and then 
				to the other bottleneck-vertex.
				Thus \bob\ cannot leave the long path and \alice\
				enters it second.
	\end{proof}
	
		We end the section on planar graphs
		with a remark, that in both planar visages 
		we presented, it is possible for the winning player to cut 
		off his or her opponent within at most 20 turns.
		But this might not be the optimal strategy,
		as shown in example \ref{ex:longPath}.
		Thus you cannot proof a lower bound of
		$\Omega(\sqrt{n})$, by showing that
		each player receives at least $\Omega(\sqrt{n})$ vertices.
		However, the author feels that a totally different 
		kind of idea would be needed to improve
		the planar visage.
		We conjecture that the upper 
		bound is tight.
		
%%%%%%%%%%%%%%%%%%%%%%%%%%%%%%%%%%%%%%%%%%%%%%%%%%%%%%%%%%%%%%%%%%%%%%%%%%%%%%%%%%%%%%%%%%%%%%%%%%
\subsection{$k$-connected Visage}

	Now we show how to construct a \emph{$k$-connected visage}.
	The essence is twofold. First, we increase the number of 
	bottleneck-vertices and secondly we replace paths of the big-circle
	by double-trees which will be introduced shortly.
	Double-trees have a high-connectivity, a bottleneck and
	the vertex-set can be traversed entirely.
	
		\begin{figure}[h]
					\begin{center}
						\includegraphics[width = 0.9\textwidth]{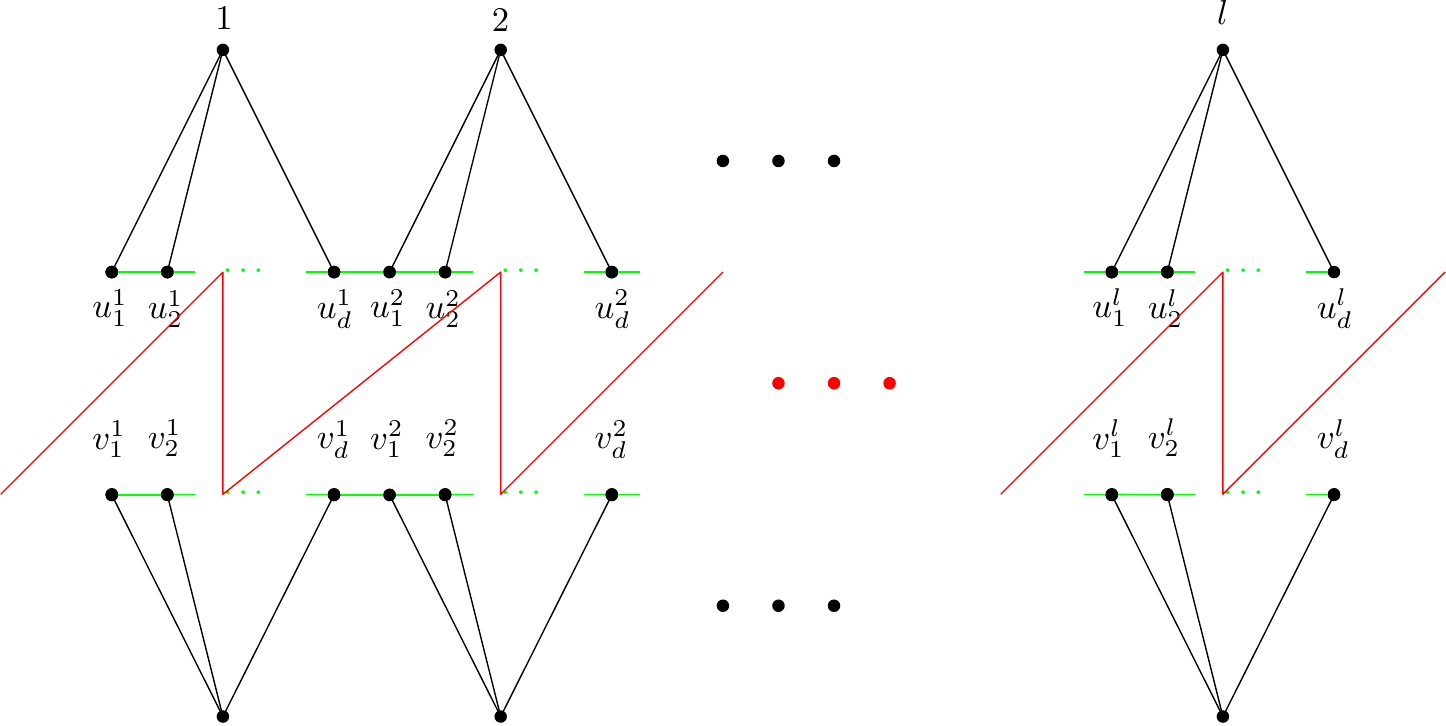}
						\caption{All the leaves with their names and parents. In green is the edge-set 
						$E_1$ indicated and in red one of the cycle partitioning the set of leaves.}
						\label{fig:doubleTreeLeafNames}
					\end{center}	
		\end{figure}
		
	\begin{example}[Double-tree]
	 	A \emph{double-tree} of degree $d$ and height $h$
		consists of two fully balanced rooted trees with 
		$d$ children for each inner vertex 
		and height $h$.
		The two double-trees are connected
		as explained below.
		
		From left to right we name the leaves of the upper half of the 
		double-tree 
		$$u_1^1, u_2^1, u_3^1, \ldots ,u_d^1 ,u_1^2, u_2^2, u_3^2, 
		\ldots, u_d^2 ,\ \ldots \ ,u_1^l, u_2^l, u_3^l, \ldots ,u_d^l$$
		and likewise we name the leaves from the
		lower half of the double-tree 
		$$v_1^1, v_2^1, v_3^1, \ldots, v_d^1, v_1^2, v_2^2, v_3^2, 
		\ldots, v_d^2, \ldots, v_1^l, v_2^l, v_3^l ,\ldots, v_d^l$$ 
		as depicted in figure \ref{fig:doubleTreeLeafNames}.
		With $l$ we denote the number of parents of the leaves.

		We add edge-set 

								$$E_1 = \left\{(u_i^j,u_{i+1}^j): i= 1,\ldots,(d-1), j=1,\ldots,l\right\} $$
								$$ \cup \left\{(u_d^j,u_{1}^{j+1}): j=1,\ldots,(l-1) \right\}$$
								$$ \left\{(v_i^j,v_{i+1}^j): i= 1,\ldots,(d-1), j=1,\ldots,l\right\} $$
								$$ \cup \left\{(v_d^j,v_{1}^{j+1}): j=1,\ldots,(l-1)\right\}$$
										
								and 
									
								$$ E_2 = \left\{(u_n^j,v_m^j): j = 1,\ldots,l ; n \leq m\right\} $$ 
								$$ \cup \left\{(u_n^j,v_m^{j+1}): j = 1,\ldots,(j-1) ; m < n \right\}  \\$$
								$$ \cup \left\{(u_n^l,v_m^{1}):  m < n \right\}.$$		 
	\end{example}
	
	\begin{figure}[h!]
		\begin{center}
			\includegraphics[width = 0.7\textwidth]{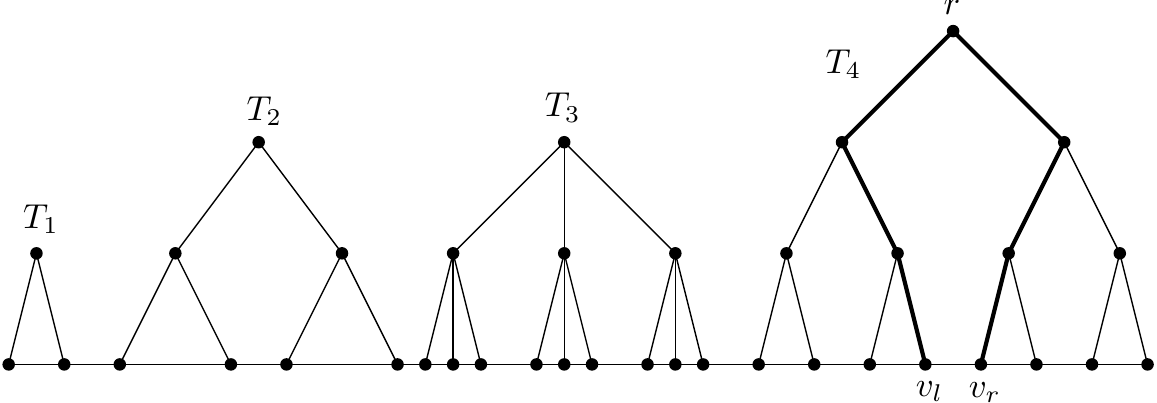}
			\caption{The induction step of lemma \ref{lem:treeTraversing}}
			\label{fig:treeTraversing}
		\end{center}	
	\end{figure}
	
	\begin{lemma} \label{lem:treeTraversing}
		Let $T_1, T_2,\ldots , T_n$ be some trees.
		And each inner-vertex has at least $2$ children.
		Now we draw all the trees crossing-free in a half plane
		s.t. all leaves are on the boundary of the half plane.
		We define leaves  as adjacent iff the line segment 
		connecting them does not contain any other vertex.
		It follows, that there is an Hamilton-path from the 
		leftmost to the rightmost leaf.
	\end{lemma}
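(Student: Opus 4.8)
The plan is to prove a stronger, more symmetric statement by induction and then read off the lemma. First I would record the only geometric fact that is needed: since every leaf lies on the boundary line while every inner vertex lies strictly inside the half-plane, the segment between two leaves is free of other vertices exactly when the two leaves are consecutive in the left-to-right order. Hence leaf-adjacency is simply ``consecutive on the boundary'', and the graph $H$ in which we seek a Hamilton path consists of all tree edges together with the path $L_1 L_2 \cdots L_N$ through the leaves in left-to-right order. I would also note that, because the drawing is crossing-free and each tree is connected, the leaf sets of distinct trees (and of distinct subtrees hanging off a common root) occupy pairwise disjoint \emph{contiguous} blocks of this order; this is the standard planarity argument, which I would invoke rather than belabour.

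Next I would reduce the forest to a single tree. If for each $T_i$ we have a Hamilton path of $T_i$ from its leftmost leaf to its rightmost leaf, we can splice these together using the consecutive-leaf edges joining the rightmost leaf of $T_i$ to the leftmost leaf of $T_{i+1}$; the result is a Hamilton path of the whole forest from the global leftmost to the global rightmost leaf. So it suffices to handle one tree.

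For a single tree the induction hypothesis must be strengthened, because to insert the root of a subtree into the path one sometimes has to enter or leave that subtree at its root rather than at a leaf. Concretely I would prove, by induction on the height of $T$, that $T$ admits all three of: (a) a Hamilton path from its leftmost leaf to its rightmost leaf; (b) one from its leftmost leaf to its root; and (c) one from its root to its rightmost leaf (the mirror of (b)). In the step write $r$ for the root and $S_1,\dots,S_k$ with $k\ge 2$ for the subtrees in left-to-right order, with roots $c_1,\dots,c_k$. For (b) I would traverse $S_1,\dots,S_{k-1}$ by their type-(a) paths, chain them with consecutive-leaf edges, then run $S_k$ by its type-(b) path ending at $c_k$ and finish with the tree edge $c_k r$; path (c) is symmetric. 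For (a) I would use $c_1 r c_2$ as a detour through the root: run $S_1$ by type (b) to reach $c_1$, cross $c_1 r c_2$, run $S_2$ by type (c) starting at $c_2$, and append $S_3,\dots,S_k$ by their type-(a) paths via consecutive-leaf edges. The base case is a single leaf, where all three paths are the trivial one-vertex path and the hypothesis on children is vacuous.

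The conceptual heart of the argument is the hypothesis that every inner vertex has at least two children: in $H$ an inner vertex carries only tree edges, so a non-root inner vertex has degree at least three and the root degree at least two, which is exactly what is needed for every inner vertex to sit in the interior of the path (a single-child inner vertex would have degree one in $H$ and would be forced to be an endpoint, which is incompatible with a leaf-to-leaf Hamilton path). I expect the only real obstacle to be the bookkeeping: checking that the three spliced constructions never reuse a vertex and that every consecutive-leaf edge invoked is genuinely an edge of $H$, both of which follow from the contiguity of the leaf blocks noted at the start. The one place I would be careful is the degenerate subtrees that are single leaves, where types (a)--(c) collapse to the trivial path; this adds no new idea but must be checked so that the splicing remains valid.
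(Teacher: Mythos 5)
Your proof is correct under the same reading of the drawing that the paper itself relies on, but it follows a genuinely different inductive scheme. The paper does not strengthen the induction hypothesis; instead it builds a partition of the whole vertex set into vertex-disjoint paths, each starting and ending at a pair of adjacent leaves: in each step it removes the unique path joining the rightmost leaf of the leftmost subtree to the leftmost leaf of the second subtree (this path passes through the root and contains no leaves in its interior), recurses on the remaining forest, and at the end concatenates the partition paths left to right along leaf-adjacency edges. You instead do structural induction on a single tree with a strengthened hypothesis, maintaining three kinds of Hamilton paths (leaf-to-leaf, leaf-to-root, root-to-leaf) and splicing them, with the detour $c_1 r c_2$ absorbing the root, plus a separate splicing step for the forest. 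What the paper's scheme buys: one induction handles forest and tree alike, and the root never has to serve as an endpoint of any partial path, so no strengthening is needed. What your scheme buys: the role of the ``at least two children'' hypothesis becomes completely transparent (it is exactly what makes the root detour in type (a) possible), and the invariants (a)--(c) are easier to check locally than the paper's claim that the facing ends of consecutive partition paths are always adjacent leaves.

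One caveat. The fact you propose to ``invoke rather than belabour'' --- that the leaf sets of distinct trees, and of distinct subtrees of a common root, form contiguous blocks in the left-to-right order --- does not follow from crossing-freeness alone. A tree (or subtree) can be drawn nested inside a face of another: place a single-vertex tree between the two leaves of a cherry, and the blocks interleave; in that two-tree example the lemma's conclusion even fails, so contiguity is an additional assumption on the drawing, not a consequence of planarity. This does not count against you relative to the paper: its proof (``leftmost subtree'', ``second subtree'', concatenation in left-to-right order) silently assumes exactly the same side-by-side placement, which is what holds in the double-tree application. Separately, your remark that a single-child inner vertex would have degree one in $H$ is off (such a vertex has degree two unless it is the root); this is only motivational, and nothing in your construction depends on it.
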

	
	\begin{proof}
		We construct a special partition of the vertex-set 
		of the trees into paths starting and ending in adjacent leaves. 
		In a second step we just connect these
		paths canonically.
		
		We do the first step by induction. If every tree is
		just a single vertex we define the
		partition to be the collection of all one element sets.
		Now let $T_1, T_2,\ldots , T_n$ be some trees
		as described above and $r$ a root of $T_i$
		and we assume, that $T_i$ consists of more
		than one vertex. Let $v_l$ be the rightmost
		leaf of the leftmost subtree of
		$T_i$ and $v_r$ the leftmost leaf of
		the second subtree (left to right)
		of $T_i$. See figure \ref{fig:treeTraversing}.
		There exists exactly one path from
		$v_l$ to $v_r$ via $r$ within $T_i$. 
		We add this path to the partition set 
		and delete it from the trees. Thus 
		we end up with a new 
		bunch of trees with fewer vertices
		in total which we can partition by
		induction.
		
		Now we connect the right end of each path
		to the left end of the next path to get the 
		desired Hamilton-path.
	\end{proof}
	
	\begin{lemma}\label{lem:doubleTreetraversing}
		There is an Hamilton-path from
		one root of the double-tree
		to the other root.
	\end{lemma}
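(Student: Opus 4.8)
The plan is to reduce the claim to Lemma \ref{lem:treeTraversing} by first upgrading that lemma from a leaf-to-leaf statement to a root-to-leaf statement for a single balanced tree, and then stitching the two halves of the double-tree together across a single edge of $E_2$.

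First I would prove the following auxiliary fact: in a balanced tree with at least two children per inner vertex, drawn with its leaves on the boundary line as in Lemma \ref{lem:treeTraversing}, there is a Hamilton path from the root to the rightmost leaf (and, by the mirror image, one to the leftmost leaf). I would argue by induction on the height. Let $r$ be the root with principal subtrees $S_1, \ldots, S_d$ from left to right. Start the path at $r$, step to the root of $S_1$, and run a root-to-rightmost-leaf Hamilton path inside $S_1$ (induction hypothesis). The rightmost leaf of $S_1$ and the leftmost leaf of $S_2$ are consecutive in the left-to-right order, hence joined by an $E_1$-edge; cross it and run a leftmost-to-rightmost Hamilton path inside $S_2$ supplied by Lemma \ref{lem:treeTraversing}. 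Repeat for $S_3, \ldots, S_d$, each time entering the next subtree through the $E_1$-edge linking consecutive boundary leaves and traversing it leftmost-to-rightmost via Lemma \ref{lem:treeTraversing}. The concatenation visits $r$ and every vertex of every $S_i$ exactly once and ends at the rightmost leaf of $S_d$, i.e.\ the rightmost leaf of the whole tree.

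Next I would apply this auxiliary fact to the two halves of the double-tree. On the upper half it yields a Hamilton path from the upper root to its rightmost leaf $u_d^l$, using only upper-tree edges and the upper part of $E_1$. Symmetrically, on the lower half it yields a Hamilton path from the lower root to its leftmost leaf $v_1^1$, which read backwards is a Hamilton path from $v_1^1$ to the lower root. It then remains to connect $u_d^l$ to $v_1^1$: the third family of $E_2$, namely $\{(u_n^l, v_m^1): m<n\}$, contains the edge $(u_d^l, v_1^1)$ since $1 < d$. Concatenating the upper path, this crossing edge, and the reversed lower path gives the desired Hamilton path from one root to the other.

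The main obstacle is the auxiliary root-to-leaf statement, because Lemma \ref{lem:treeTraversing} only produces leaf-to-leaf paths and so cannot place an inner vertex (the root) at an endpoint; the induction above is exactly what bridges this gap, and the one subtlety to check there is that every jump between consecutive subtrees really is an $E_1$-edge, which holds because consecutive boundary leaves are adjacent by definition. A secondary point to verify is merely that the single crossing edge between the two chosen endpoints lies in $E_2$, which the explicit description above confirms.
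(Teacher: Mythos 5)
Your proof is correct, but it reaches the crucial root-to-leaf Hamilton path by a genuinely different route than the paper. The paper does not strengthen Lemma \ref{lem:treeTraversing} by an induction; instead it exploits the fact that the lemma is stated for \emph{forests}: it walks from the upper root straight down the leftmost branch to $u_1^1$, steps to $u_2^1$, and then applies Lemma \ref{lem:treeTraversing} a single time to the forest that remains after deleting this root-to-leaf path (every component is again a tree with $d \geq 2$ children per inner vertex, and the leftmost and rightmost remaining leaves are $u_2^1$ and $u_d^l$). It then ferries over the edge $(u_d^l, v_d^l)$, which lies in the family $\left\{(u_n^j,v_m^j): n\leq m\right\}$ with $n=m=d$ and $j=l$, and copies the mirrored path in reverse on the lower half, ending at the lower root. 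Your induction on height, which invokes Lemma \ref{lem:treeTraversing} separately inside each subtree $S_2,\ldots,S_d$ at every level, is a valid alternative; it has the mild advantage of only ever using the lemma for a single tree, at the price of an extra inductive argument that the paper's one-line peeling trick avoids. Your choice of crossing edge $(u_d^l, v_1^1)$, taken from the wrap-around family $\left\{(u_n^l,v_m^1): m<n\right\}$ and valid since $1<d$, also differs from the paper's but is equally legitimate, as is your left-right mirroring of the auxiliary fact to obtain the lower-half path ending at $v_1^1$.
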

	
	\begin{proof}
		We start our tour at the top root and go down to $u_1^1$
		and from there to $u_2^1$. 
		Next we use the path constructed in lemma \ref{lem:treeTraversing}
		to traverse the rest of this half of the
		double-tree. After ferrying over from 
		$u_d^l$ to $v_d^l$ 
		we copy 
		our path in reverse order and
		thus have traversed every vertex without
		reusing any.
%		($(u_d^l,v_d^l)\in E_2$)
	\end{proof}

	\begin{lemma}\label{lem:doubleTreeKConnected}
		The double-tree is d-connected
	\end{lemma}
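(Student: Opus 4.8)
The plan is to pin the connectivity down from both sides: first observe that $\kappa \le d$, and then prove $\kappa \ge d$ by a separator argument. For the upper bound, note that each of the two roots has exactly $d$ incident edges, namely to its $d$ children and to no parent, whereas every other vertex has degree at least $d+1$ (an internal non-root vertex has $d$ children and one parent, and a leaf has its parent edge together with $d$ edges of $E_2$). Hence $\delta = d$, and since $\kappa \le \delta$ we get $\kappa \le d$. It therefore remains to show that no vertex set $S$ with $|S| \le d-1$ disconnects the double-tree.

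The key reduction is to route everything through the leaves. Let $L$ be the subgraph induced on all $2d^{h}$ leaves together with the edges $E_1 \cup E_2$. I would establish two facts. \textbf{(A)} $L$ is itself $d$-connected. \textbf{(B)} For every vertex $v$ there are $d$ paths from $v$ to $d$ \emph{distinct} leaves that lie inside $v$'s own half-tree and are pairwise disjoint except at $v$. Granting (A) and (B), fix any $S$ with $|S| \le d-1$ and any $v \notin S$. Because the $d$ paths of (B) are internally disjoint and end in distinct leaves, each vertex of $S$ lies on at most one of them; as $|S| < d$, at least one path avoids $S$ entirely, joining $v$ in $G-S$ to a surviving leaf. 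By (A) the graph $L-S$ is connected, so all surviving leaves lie in a single component, and hence so does every surviving vertex. This contradicts $S$ being a separator and gives $\kappa \ge d$, completing the lemma.

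Claim (B) is the routine half: in a balanced tree of degree $d$ every internal vertex, in particular each root, has $d$ children whose subtrees are nonempty, so following one downward path into each subtree produces $d$ leaf-paths meeting only at $v$; a leaf needs no path since it already lies in $L$. The main obstacle is Claim (A). Here I would exploit the explicit shape of $E_2$: indexing the upper leaves $U_1,\ldots,U_N$ and the lower leaves $V_1,\ldots,V_N$ with $N = d\cdot l$ along the two leaf-paths coming from $E_1$, one checks that $E_2$ joins each $U_k$ to the $d$ consecutive opposite leaves $V_k,V_{k+1},\ldots,V_{k+d-1}$ (cyclically), and symmetrically each $V_k$ receives $d$ consecutive upper neighbours. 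Thus every leaf has exactly $d$ neighbours on the opposite side, arranged as a sliding window of width $d$. To prove (A) I would delete an arbitrary set of at most $d-1$ leaves and argue that the two leaf-paths break into segments that are forced to be relinked across the two sides: since only $d-1$ leaves are removed while each window has width $d$, every surviving leaf retains a cross edge, and the overlap of consecutive windows chains neighbouring segments into one component.

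I expect the genuine work, and the one place where a careful case analysis seems unavoidable, to be exactly this argument for (A): controlling how the deletions cut the two leaf-paths and verifying that the width-$d$ windows always suffice to reconnect the fragments whenever fewer than $d$ leaves are gone, including the cyclic wrap-around described by the last clause of $E_2$. Everything else, namely the degree bound, the fan-to-leaves of (B), and the final gluing step, is straightforward once (A) is secured.
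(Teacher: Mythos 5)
Your overall architecture is not wrong: claim (B) and the gluing step are correct, and the upper bound $\kappa\le d$ via the roots is fine. But the proposal has a genuine gap, and it sits exactly where you put all the content: claim (A) is never proved, and the mechanism you sketch for it does not work. The assertion that ``the overlap of consecutive windows chains neighbouring segments into one component'' fails already locally. Take $d=3$ and delete $U_1$ and $V_2$ (two vertices, within the budget $d-1=2$). The lower leaf-path breaks into the segments $\{V_1\}$ and $\{V_3,\ldots,V_N\}$, and the only upper leaf whose window contains both $V_1$ and $V_3$ is $U_1$, which is gone; so no window overlap bridges these two neighbouring segments. The surviving cross edges at $V_1$ go to $U_{N-1}$ and $U_N$, i.e.\ the two pieces are reconnected only by travelling around the far side through the cyclic wrap-around. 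Hence any correct proof of (A) must be global rather than a gap-by-gap chaining argument. Claim (A) is in fact true: if the surviving leaf graph were disconnected, then (since every surviving leaf keeps a cross edge) both a component and its complement would contain upper and lower leaves, with upper sets $\mathcal{U}_1,\mathcal{U}_2$ and lower sets $\mathcal{V}_1,\mathcal{V}_2$; the window-shadow $\mathcal{U}_1+\{0,\ldots,d-1\}$ must lie inside $\mathcal{V}_1$ plus the deleted lower leaves and symmetrically for $\mathcal{V}_1$, and comparing cardinalities of these cyclic-interval sumsets forces $2(d-1)\le|S|\le d-1$, a contradiction. But some such global argument has to be supplied; as submitted, your proof delegates precisely the part that carries all the difficulty.

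It is worth seeing how the paper sidesteps (A) entirely. Instead of claiming the whole leaf graph is $d$-connected, it exhibits $d$ pairwise vertex-disjoint cycles that partition the leaves: cycle $i$ runs through the $i$-th leaf $u^j_i,v^j_i$ of every parent. From each of the two given vertices $w_1,w_2$ it routes one path to each cycle --- this is your claim (B), sharpened so that the $d$ leaves reached lie on $d$ \emph{different} cycles, which is possible because every parent of leaves has one child on each cycle --- and then joins the two paths arriving at cycle $i$ inside that cycle. Since the cycles are disjoint from one another, the $d$ resulting $w_1$--$w_2$ paths are automatically vertex-disjoint, and Menger's theorem yields $d$-connectivity; no connectivity statement about the leaf graph as a whole is ever needed. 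Replacing your unproven global claim (A) by this explicit disjoint-cycle structure (or else supplying a genuinely global argument such as the shadow-counting one above) is the idea your plan is missing.
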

	\begin{proof}
		For any $2$ nodes we will show that 
		there are $d$ vertex-disjoint paths connecting them.
		And thus by Menger's theorem \cite[section 3.3]{diestel2005graph},
		we know that the graph is $d$-connected.
		Close Observation shows that there are $d$ vertex-disjoint
		cycles partitioning the leaves. Cycle $i$ is described via
		$u^1_i, v^1_i, u^2_i, v^2_i, u^3_i, v^3_i, \ldots , u^d_i, v^d_i$,
		see the red path in figure \ref{fig:doubleTreeLeafNames}.
		To find $d$ vertex-disjoint paths from some $w_1$
		to some $w_2$,
		it suffices to show that there exist
		$d$ vertex disjoint paths to the $d$
		different cycles. This is clear 
		for every leaf. It is also clear for every 
		inner vertex as every inner vertex has $d$ 
		children and every parent of a leaf is adjacent 
		to all $d$ cycles.
		To construct $d$ disjoint paths from
		$w_1$ to $w_2$, we use $d$ vertex-disjoint direct paths
		from $w_1$ to the cycles and from $w_2$ 
		to the cycles. These paths can be connected to 
		each other, via the cycles. The only thing that can go wrong,
		is that a path from w.l.o.g. $w_1$ to a cycle
		uses $w_2$. But this can only happen once and would give us
		a path from $w_1$ to $w_2$, 
		which is disjoint from the others.
	\end{proof}

	\begin{figure}[h!]
					\begin{center}
						\includegraphics[width = 0.4\textwidth]{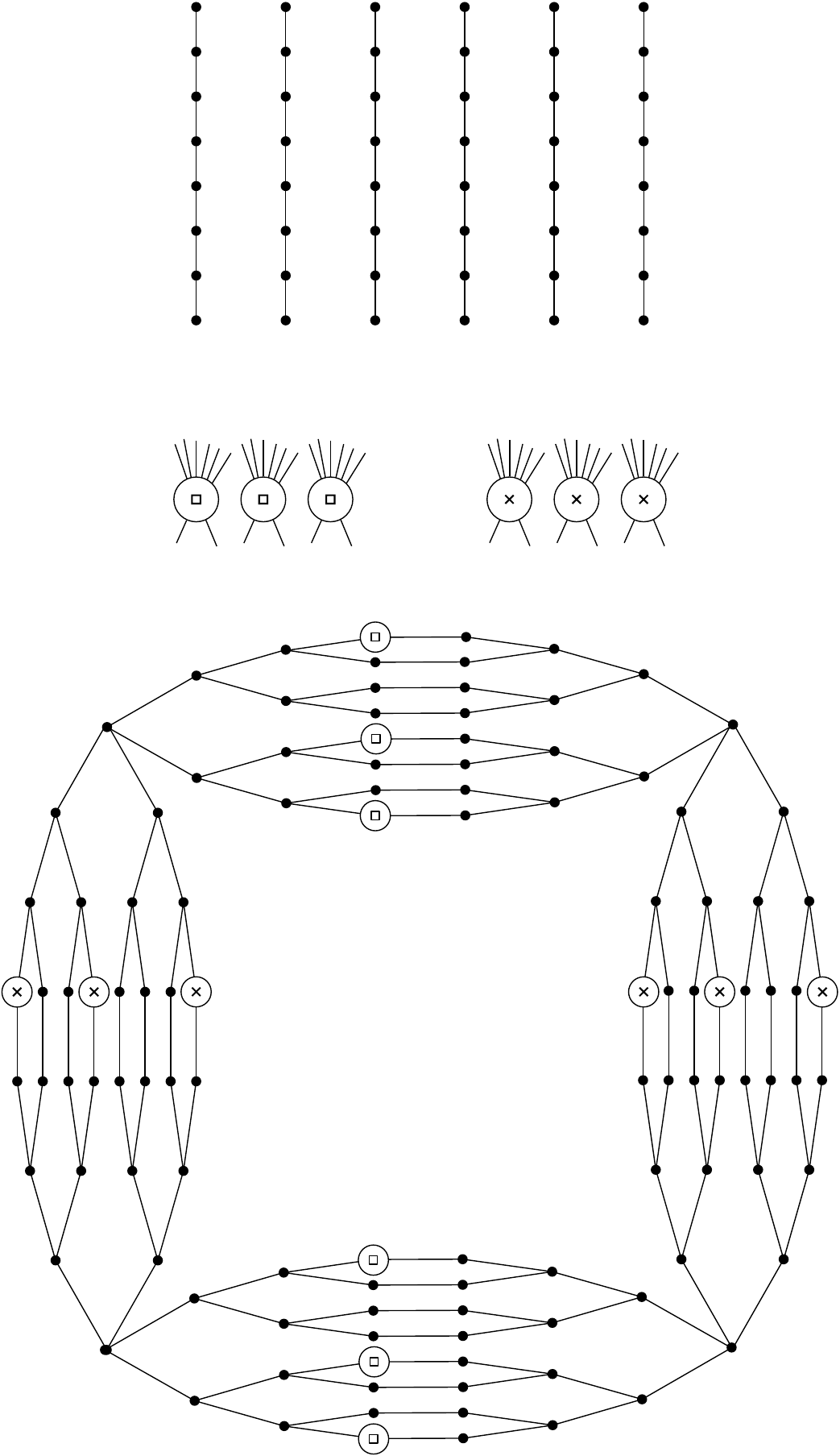}
						\caption{A $3$-connected visage with $4$ 
						double-trees, the leaves marked with a 
						$\Square$ are connected with a 
						$\Square$-bottleneck-vertex and
						likewise leaves marked with a 
						$\times$ are connected with a 
						$\times$-bottleneck-vertex.
						For clarity of the drawing
						we we chose the height of the double-trees
						too small and omitted some
						edges of the double-trees.	}
						\label{kConnectedVisage}
					\end{center}	
		\end{figure}
	
	\begin{lemma}[Many afar leaves]\label{lem:manyAfarLeaves}
		For all natural numbers $m,n$ and $d \geq 2$ there exists some
		$h_0\in \N$ such that a double-tree
		with height $h\geq h_0$ 
		and degree $d$ has at
		least $m$ leaves which all have
		pairwise distance at least $n$.
	\end{lemma}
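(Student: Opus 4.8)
The plan is to use a greedy ball-packing argument, which has the advantage of sidestepping any detailed understanding of the shortcut edges $E_1$ and $E_2$. The single structural fact I need is that the double-tree has \emph{bounded maximum degree}, depending only on $d$ and not on the height $h$.

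First I would bound the maximum degree $\Delta$. An inner tree-vertex has $d$ children and one parent, so degree $d+1$, and the roots have degree $d$. A leaf has one parent, at most two neighbours coming from $E_1$ (its predecessor and successor in the leaf-ordering), and a bounded number of neighbours from $E_2$: counting the $E_2$-edges incident to a fixed leaf $u_n^j$ gives the $v_m^j$ with $m\geq n$ and the $v_m^{j+1}$ with $m<n$, which together number at most $2d-1$, and a symmetric count holds for a $v$-leaf. Hence every vertex has degree at most some $\Delta=\Delta(d)$ (the crude bound $\Delta := 2d+3$ already works), and this bound is \emph{independent of $h$}. This degree count is the one place where the heavy notation of $E_1$ and $E_2$ must be handled carefully, and it is the main obstacle; once it is in hand the rest is immediate.

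Second, from the bounded degree I would deduce that the ball of radius $n-1$ around any vertex, in the graph-distance of the double-tree, contains at most $B := \sum_{k=0}^{n-1}\Delta^{k}$ vertices, by the standard observation that a vertex at distance $k$ is the endpoint of a walk of length $k$ and there are at most $\Delta^{k}$ such walks starting from a fixed vertex. Crucially $B=B(n,d)$ does not depend on $h$.

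Finally I would run the packing. The double-tree has $2d^{h}$ leaves, a quantity that tends to infinity with $h$, so I may fix $h_0$ so large that $2d^{h_0}\geq m\,B$. Now choose leaves one at a time: having selected $\ell_1,\dots,\ell_k$, discard every leaf lying at distance at most $n-1$ from one of them, and let $\ell_{k+1}$ be any surviving leaf. Each selected leaf discards at most $B$ leaves, so the process survives for at least $2d^{h}/B\geq m$ steps whenever $h\geq h_0$. By construction every newly chosen leaf lies outside the radius-$(n-1)$ ball of each previously chosen one, hence at distance at least $n$ from all of them, so the resulting $m$ leaves are pairwise at distance at least $n$, as required.
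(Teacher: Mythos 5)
Your proof is correct and takes essentially the same route as the paper's: a maximum-degree bound depending only on $d$ (the paper uses $c=d+3$, you use the cruder but equally valid $2d+3$), the observation that the number of leaves grows exponentially in $h$, and a ball-volume bound $\sum_{k}\Delta^{k}$ combined with greedy selection of far-apart leaves, which the paper phrases as an induction on the number of chosen leaves. The two arguments are the same in substance, with yours spelled out somewhat more carefully.
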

	\begin{proof}
		Every vertex has
		at most degree $c = d+3$.
		The number of leaves grows
		strictly monotone with $h$.
		In fact, it grows exponentially.
		
		We proceed by induction on $n$.
		If we have leaf $v$,
		then we have at most $c + c^2 +c^3 + \ldots +c^m$
		many leaves within distance $m$ to $v$
		and thus we can find a leaf $w$ which
		has distance larger than $m$ to $v$,
		if we choose $h_0$ large enough.
		This shows the base case $n=2$.
		
		On the other hand $n$ leaves have at most
		$n(c + c^2 +c^3 + \ldots +c^m)$
		leaves within distance $m$ and thus
		we could find a leaf which has 
		distance $m$ to all the other 
		leaves, if $h_0$ is large enough.
		This proves the induction step.
	\end{proof}
		
	\begin{example}[$k$-connected visage]
	  The $k$-connected visage
	  is depicted in figure \ref{kConnectedVisage}.
		It consists again of three parts:
		The overhead graph is composed out 
		of $k$ sufficiently long paths.
		The bottleneck-vertices are separated 
		into two groups each of $k$ $\Square$-vertices
		and $k$ $\times$-vertices.
		Every vertex of the overhead graph
		is connected 
		to every vertex of the bottleneck.
		The big-circle consists of degree $k$ double
		trees which we string together by their roots like a
		jeweler strings pearls together in a necklace.
		Now the height $h$ of the double trees
		is picked such that each double tree has
		$k$ leaves, which all have pairwise 
		distance at least $2k$, see lemma \ref{lem:manyAfarLeaves}.
		On these spots we connect alternating either $k$
		$\Box$-vertices or $k$ $\times$-vertices.
		This construction is indeed $k$-connected and
		\bobwin\ is everything but a constant.
	\end{example}
	
	\begin{proof}
		To see $k$-connectedness 
		we assume $(k-1)$ vertices have
		been deleted. We will show
		that the graph is still connected.
		It suffice to show that, we can 
		still connect every vertex to one
		of the bottleneck-vertices,
		let us say $v$.
		This is clear for any vertex of the 
		overhead graph. Any other vertex
		of the bottleneck is connected to
		$v$ via any vertex of the 
		overhead graph.
		In lemma \ref{lem:doubleTreeKConnected} we showed, that every double-tree 
		is $k$-connected.
		Thus every vertex stil has
		a path to a leaf that is connected
		to one of the bottleneck-vertices,
		which is itself connected to $v$.
		
		To prove that \bobwin\ is everything 
		but a constant, we only point out
		where \bob s strategy differs from 
		his strategy for the ordinary visage.
		\begin{itemize}
			\item \bob\ can wait arbitrarily long in 
				the overhead graph, before he has to enter 
				the big-circle if the paths are long enough,
				but still constant length as a function of $n$.
			\item \alice\ might go back and forth between 
				the overhead graph and the bottleneck. 
				While she does that, \bob\ can maintain
				that the number of $\Box$-vertices
				equals the number of $\times$-vertices.
				Thus when \alice\ enters the big-circle via
				one kind of vertex,
				\bob\ can still enter via the other kind,
				at some much later stage.
			\item Once \alice\ enters the big-circle (w.l.o.g. via 
				a $\Square$-vertex.)
				\bob\ can traverse all remaining $\Square$-vertices in $2k-3$
				moves and thus prevent \alice\ from returning 
				to the overhead graph.
			\item After a constant number of turns in the big-circle
				\alice\ has to use a root of a double tree.
				This forces her to decide for a direction, she 
				wants to go on the big-circle. 
				Accordingly \bob\ enters the big-circle, once she
				has decided and cuts her off by reaching the next root
				on the big-circle earlier than she does.
		\end{itemize}
	\end{proof}
	
	Adding a super-vertex as in lemma 
	\ref{lem:supervertex} gives us instantly a $k$-connected 
	visage which is good for \alice .	
\section{Complexity Question}\label{complQuest}

	In this section we show that Tron is PSPACE-complete.
	To do this, it turned out to be convenient to consider
	variations where the graph is directed and/or
	start positions for \alice\ and \bob\  are given.
	We reduce Tron to quantified boolean formula(QBF). 
	It is well known, that it is PSPACE-complete
	to decide if a QBF $\varphi$ is true.
	A quantified boolean formula has the
	form $\varphi \equiv \exists x_1\forall x_2 
	\exists x_3 \forall x_4  \ldots : C_1\wedge \ldots \wedge C_k$ 
	with each $C_i \equiv L_{i_1}\vee L_{i_2}\vee L_{i_3}$ 
	and $L_{i_j}$ some Literals) \cite[section 8.3]{DBLP:books/daglib/0086373}.
	In theorem \ref{dgsPSPACE} we will construct for each $\varphi$  
	a directed graph $G_{\varphi}$
	with given start positions $v_1$ and $v_2$
	such that \alice\ has 
	a winning strategy if and only if $\varphi$
	is true.
	In theorem \ref{ugsPSPACE} we will modify this graph,
	such that it becomes undirected.
	In theorem \ref{dngsPSPACE} we will construct a
	directed overhead graph to $G_\varphi$,
	which will force \alice\ and \bob\ to choose 
	certain starting positions. 
	At last in theorem \ref{ungsPSPACE} we will construct
	an undirected overhead graph. Here we will make
	use of the constructions of the preceding theorems. 
	
	\begin{figure}[h!]
			\begin{center}
				\includegraphics[width = 0.4\textwidth]{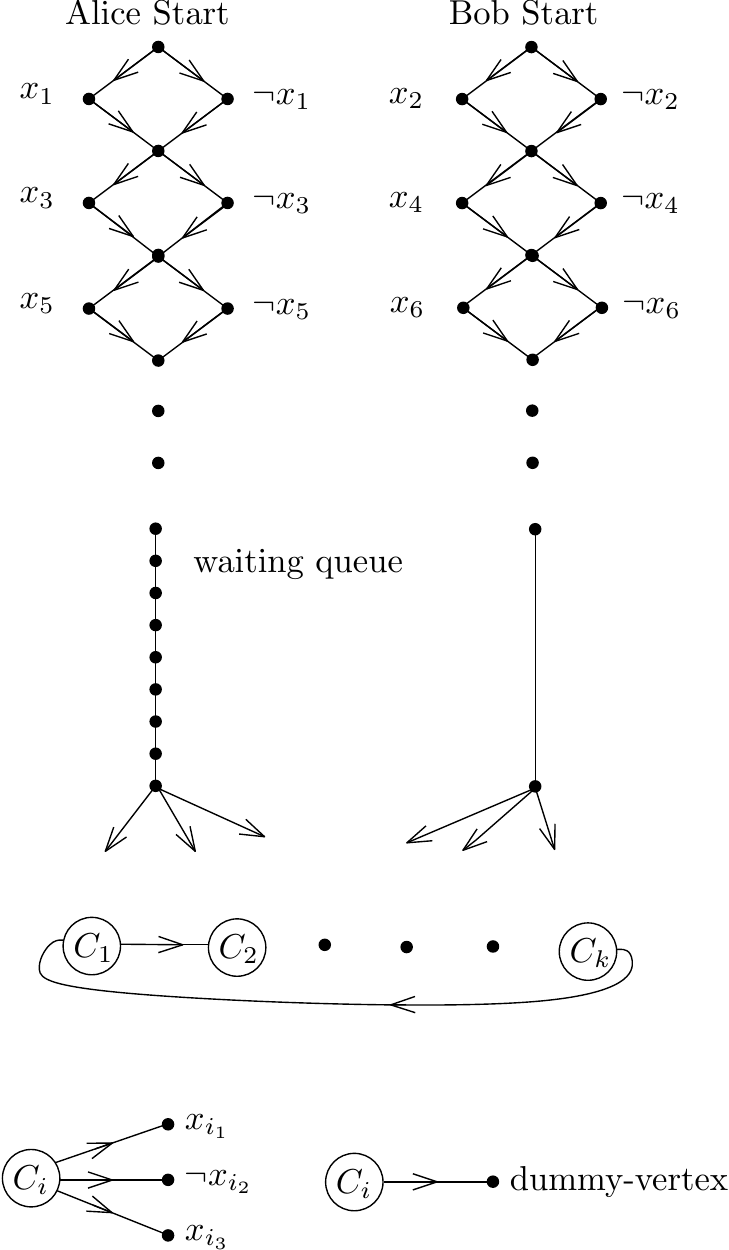}
				\caption{$G_\varphi$}
				\label{fig:directedGivenStartGraph}				
			\end{center}
		\end{figure}
	
	Theorem \ref{dgsPSPACE} has already been  proven 
	by Bodlaender \cite{DBLP:journals/tcs/Bodlaender93}
	and is similar to the proof that generalized geography 
	is PSPACE-complete \cite{DBLP:books/daglib/0086373}.
	We repeat his proof, with subtle changes. These 
	differences are necessary for 
	theorem \ref{ugsPSPACE}, \ref{dngsPSPACE} and 
	\ref{ungsPSPACE}	to work.
	%%%%%%%%%%%%%%%%%%%%%%%%%%%%%%%%%%%%%%%%%%%%%%%%%%%%%%%%%%%%%%%%%%%%%%%directed given start
	\begin{theorem}\label{dgsPSPACE}
		The problem to decide if \alice\ has a winning strategy in 
		a \emph{directed} graph \emph{with} given start positions
		is PSPACE-complete.
	\end{theorem}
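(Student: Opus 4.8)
The plan is to establish the two halves of the statement separately: membership in PSPACE and PSPACE-hardness. For membership I would observe that \tron\ is a finite perfect-information game whose game tree has only polynomial depth. After the two start-vertices are fixed, each player can make at most $|V(G)|$ further moves before every neighbour of their current vertex is used up, so every play consists of at most $2|V(G)|+2$ decisions. Hence a depth-first minimax recursion that keeps only the vertices used along the current branch, i.e. the current game state together with the moves leading to it, runs in space $O(|V(G)|^2)$. This computes the optimal outcome and in particular decides whether \alice\ can force $\alicewin > \bobwin$, placing the problem in PSPACE.

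For hardness I would reduce from QBF and construct the directed graph $G_\varphi$ of figure \ref{fig:directedGivenStartGraph} in the spirit of the generalized-geography reduction. The backbone is a chain of diamond gadgets, one per variable $x_i$: an entry vertex branches into a ``true'' path and a ``false'' path which merge at an exit vertex feeding the next diamond. Using directed edges, the fixed start positions $v_1,v_2$, and the path-lengths inside each diamond, I would control which of the two tokens reaches the branching vertex of diamond $i$ first, and hence which player commits the value of $x_i$; the lengths are tuned so that \alice\ decides exactly the existential variables and \bob\ exactly the universal ones, in prefix order, padding with dummy variables where the quantifiers do not alternate. After the last diamond the backbone reaches a clause-selection vertex controlled by \bob, who names a clause $C_j$; from $C_j$ its three literal vertices branch off and \alice\ picks the literal she claims is satisfied. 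Each literal vertex is wired back into its diamond so that the move is legal precisely when the committed assignment makes that literal true. Finally a long reward path is attached so that the player who is not trapped in the verification phase traverses asymptotically more vertices than the other.

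The correctness claim is that \alice\ can force $\alicewin > \bobwin$ in $G_\varphi$ if and only if $\varphi$ is true: her choices along the existential diamonds realise a strategy for the $\exists$-player, \bob's choices along the universal diamonds are the adversarial ones, and the clause/literal phase lets \bob\ challenge any clause while \alice\ must answer with a satisfied literal; she can always answer, and thus reach the reward path, exactly when the formula holds, while a false formula lets \bob\ cut her off in the verification phase. The step I expect to be the main obstacle is precisely the translation between the single-token geography game and the two-token \tron\ game: the internal diamond-lengths, the literal-to-diamond back-edges, and the turn-parity bookkeeping must be arranged so that the \tron\ win condition of \emph{traversing more vertices} mirrors the geography condition of \emph{the opponent being stuck}, with no off-by-one escape for either player. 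I would additionally take care to realise the gadgets in a shape that survives the later modifications, where the same skeleton is made undirected (Theorem \ref{ugsPSPACE}) and where the start positions are eventually enforced by an overhead construction rather than given (Theorems \ref{dngsPSPACE} and \ref{ungsPSPACE}).
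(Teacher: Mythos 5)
Your PSPACE-membership argument is essentially the paper's: the game lasts only linearly many turns once the start vertices are fixed, so a depth-first minimax search of the game tree runs in polynomial space. Your high-level plan for hardness---variable diamonds realising an assignment, \bob\ challenging a clause, \alice\ answering with a literal whose legality is enforced by the no-reuse rule---is also the skeleton of the paper's reduction. However, the step you yourself flag as ``the main obstacle'' is exactly the content of the paper's proof, and your sketch of it would fail as written. First, you route both tokens along a single shared backbone of diamonds and let whichever token ``reaches the branching vertex of diamond $i$ first'' commit $x_i$. With two tokens this collapses: the trailing token runs into vertices already consumed by the leading one (in particular the merge vertex of every diamond), so the second player can never follow the backbone at all. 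The paper instead gives \alice\ and \bob\ \emph{separate, parallel} chains of variable gadgets starting from $v_1$ and $v_2$; each player makes the left/right choices on his or her own chain, moving in lockstep, so the assignment is built up in quantifier order without the tokens ever colliding.

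Second, ``a clause-selection vertex controlled by \bob, who names a clause $C_j$, after which \alice\ picks a literal'' is a single-token (generalized geography) move structure; in \tron\ \bob\ cannot hand a vertex over to \alice. The paper's mechanism here is the actual new idea: the $k$ clause vertices form a \emph{directed cycle}, \alice\ is forced into a waiting queue of length $k-1$, and meanwhile \bob\ enters the cycle and eats $k-1$ of the clause vertices, so the unique clause vertex left over when \alice\ arrives is precisely \bob's challenge. Finally, your ``long reward path'' is both unanalysed (you would have to prevent \bob\ from reaching it as well) and unnecessary; the paper's accounting is tighter: from the clause vertex each player has at most one more move (a dummy vertex reserved for \bob, a true-literal vertex for \alice), so \alice\ wins by exactly one move iff the challenged clause is satisfied, and otherwise the game ends in a tie, which under the paper's convention is not a win for her. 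Without these two mechanisms---the parallel variable chains and the clause-cycle-plus-waiting-queue synchronisation---the reduction does not go through, so the proposal has a genuine gap rather than a different route.
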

	\begin{proof}
		Given a QBF $\varphi$ with $n$ variables and $k$ clauses
		we construct a directed graph $G_{\varphi}$ as 
		depicted in figure \ref{fig:directedGivenStartGraph}.
		It consists of starting positions for
		\alice\ and for \bob\ from where variable-gadgets 
		begin such that \alice\ and \bob\ 
		have to decide whether they move left or right 
		which represents an assignment of 
		the corresponding variable.
		Thereafter \alice\ has to enter a 
		path of length $k-1$, which we 
		call the waiting queue. Meanwhile
		\bob\ can enter the clause gadget,
		which consists of $k$ vertices arranged
		in a directed cycle each representing exactly one
		clause. 
		Thus \bob\ can traverse all but one clause-vertex
		before \alice\ enters the clause-gadget. 
		When she enters, she has only one clause-vertex
		to go to, which was chosen by \bob .
		Now from each clause-vertex we have edges 
		to the corresponding variables and one edge to  
		a dummy-vertex. So each player can make at most one more turn.
		Thus \bob\ takes the dummy-vertex.
		Consequently if $\varphi$ was true \alice\ had
		a strategy to assign the variables in a way
		that every clause becomes true
		and she is still able to make one more turn and 
	 	therefore wins.
	 	Otherwise \bob\ has a strategy to assign 
	 	the variables, such that at least one clause 
	 	is false. Thus \alice\ cannot move 
	 	anymore from the clause-vertex 
	 	and the game ends in a tie.
	 	This shows PSPACE-hardness.
		As the game ends after a linear number of turns,
		it is possible to traverse the game tree using linear space.
		See \cite{DBLP:books/daglib/0086373} for a similar 
		argument.
	\end{proof}
	
	Our  approach is to take
	the graph $G_{\varphi}$ from theorem \ref{dgsPSPACE} 
	and convert it to a working construction
	for theorem \ref{ugsPSPACE}.

	%%%%%%%%%%%%%%%%%%%%%%%%%%%%%%%%%%%%%%%%%%%%%%%%%%%%%%%%%%%%%%%%%%%%%%%%%%%%undirected given start
	\begin{theorem}\label{ugsPSPACE}
		The problem to decide whether \alice\ has a winning strategy in 
		an \emph{undirected} graph \emph{with} given start positions
		is PSPACE-complete.
	\end{theorem}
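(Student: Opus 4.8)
The plan is to reuse the directed graph $G_\varphi$ from the proof of Theorem~\ref{dgsPSPACE} and to build from it an undirected graph $G'_\varphi$ that simulates the same race, replacing every directed edge or junction by an undirected gadget that preserves the outcome. The single most useful observation is that \tron\ forbids reusing a vertex, so a player walking along a path can never turn around: the vertex behind her has already been traversed. Consequently every directed \emph{path} of $G_\varphi$ may be taken over verbatim as an undirected path at no cost, and the only places that require attention are the few vertices of out-degree or in-degree larger than one, namely the variable-gadgets where a player commits to an assignment, the directed clause-cycle that \bob\ sweeps, and the literal back-edges running from clause-vertices to the corresponding variables.

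At each such spot I would install a \emph{one-way gadget}: a small undirected graph with a designated forward port and backward port, designed so that a player entering through the forward port leaves through the other port after a controlled, fixed number of moves, while a player who tries to enter through the backward port is immediately funnelled into a pendant dead-end of a few vertices from which she cannot escape. Concretely, (i) at a variable-gadget the left and right branches are replaced by such gadgets, so that once a player has committed to a literal she cannot profitably walk back up the selection chain; (ii) the clause-cycle is rebuilt as an undirected cycle equipped with these gadgets on each arc, so that \bob\ can still traverse all but one clause and force \alice\ into the leftover clause-vertex exactly as before, but traversing an arc against its orientation dead-ends the mover; and (iii) the literal back-edges are subdivided and fitted with a trap so that they are usable only in the clause-to-variable direction at the verification step, a premature use in the variable-to-clause direction stranding the player. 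Throughout, each gadget is given the same length modulo two and all edges are subdivided uniformly, so that the crucial timing of the directed construction---\alice's waiting queue of length $k-1$ against \bob's trip around the $k$ clauses---is preserved up to constants that do not affect who gathers more vertices.

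The mechanism that makes a wrong-way move fatal is a parity-and-pursuit argument of the same flavour as the cut-off arguments in the visage proofs: since the two players move alternately and every gadget has fixed parity, a player who deviates arrives at the next junction one tempo out of step and is cut off by her opponent, losing the handful of vertices she spent in the dead-end. Granting this, rational play in $G'_\varphi$ never traverses a gadget backwards, so optimal play projects exactly onto optimal play in the directed $G_\varphi$; hence \alice\ wins on $G'_\varphi$ if and only if she wins on $G_\varphi$, which by Theorem~\ref{dgsPSPACE} happens if and only if $\varphi$ is true. This establishes PSPACE-hardness. Membership in PSPACE is inherited unchanged: the subdivided game still terminates after a number of turns linear in the number of vertices, so its game tree can be searched in polynomial space, just as in Theorem~\ref{dgsPSPACE}.

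The hard part will be proving that the one-way gadgets are robust against \emph{every} deviation rather than merely the intended line of play. In an adversarial setting one must show that for each gadget and each illegal direction the deviating player provably loses under best response by the opponent, and that these local guarantees compose into a global one without any player finding an unforeseen shortcut. I expect the delicate interaction to be between the clause-cycle and the literal back-edges: one has to certify that \alice's single legitimate satisfying move survives the undirected conversion while every spurious clause-to-clause or variable-to-clause shortcut created by dropping orientations is killed by a trap. This is precisely the sort of tedious but routine case analysis the introduction warns the reader to anticipate.
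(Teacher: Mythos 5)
Your construction hinges on a device that cannot exist: an undirected ``one-way gadget'' whose backward port ``funnels'' the entrant into an inescapable pendant dead-end. In \tron\ the players choose their own moves; nothing can funnel anyone anywhere. If a gadget admits an internal path from its forward port to its backward port (which it must, or forward traversal is impossible), then a player entering at the backward port can simply walk that same path in reverse, because the edges are undirected. So wrong-way traversal can never be made \emph{impossible}, only \emph{unprofitable}, and your concluding claim that ``rational play in $G'_\varphi$ never traverses a gadget backwards, so optimal play projects exactly onto optimal play in the directed $G_\varphi$'' is precisely the statement that still needs a proof. The paper's proof does not attempt one-way gadgets at all: it makes deviations unprofitable by global, length-tuned modifications --- the variable-to-clause back-edges become slow-paths of length $2k+n$, spare-paths of length $2n+k$ are attached to the variable-vertices and the dummy-vertex, the approach to the dummy-vertex is rebuilt so \bob\ cannot profitably go there and return, and, crucially, the waiting queue is protected not by any one-way mechanism but by \emph{redundancy}: it branches after two vertices, so that when \bob\ invades it, \alice\ takes a branch he cannot reach and he traps himself. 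This last point shows why your plan breaks down concretely: the queue invasion is a backward traversal by \bob, your gadget cannot stop it, and the actual fix is structural (alternative routes for \alice), not local one-way enforcement.

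A second genuine gap is your timing claim that uniform subdivision preserves the outcome ``up to constants that do not affect who gathers more vertices.'' In this reduction constants are everything: in the directed game of theorem \ref{dgsPSPACE}, \alice\ wins by a margin of exactly one vertex (the single extra move from the clause-vertex to an untraversed literal), so any constant-size asymmetry introduced by your gadgets flips the answer. The paper's spare-paths exist precisely to repair this: after the clause phase, \bob's best continuation (to the dummy-vertex and along its spare-path) and \alice's continuation (to an untraversed variable-vertex and along its spare-path, available if and only if her assignment satisfies the clause \bob\ left her) are engineered to have equal length, so that the game is again decided solely by satisfiability. Without this endgame balancing, and without case analysis showing that neither player profits from the spare-paths or slow-paths prematurely, the equivalence ``\alice\ wins in $G'_\varphi$ iff $\varphi$ is true'' does not follow. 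Your PSPACE membership argument is fine, but the hardness half needs the concrete modifications and the rational-play case analysis, not a composition of local gadget guarantees that cannot be realized in an undirected graph.
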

	
	\begin{proof}
		We replace every directed edge of $G_{\varphi}$
		by an undirected one. Further, we will
		carry out $4$ modifications and 
		later prove, that the resulting 
		graph $G_{\varphi}'$ has
		the desired properties.

		\begin{modification}[slow-path]
		As we want that \alice\ and \bob\
		assign each variable in order,
		we must prevent
		them from using the edge from a variable-vertex to 
		a clause-vertex. We achieve this via
		elongating every such edge to a path 
		of  length $2k+n$. See figure \ref{fig:ModWaitingQueueClauseVertex}.
		\end{modification}

		\begin{figure}
			\begin{center}
				\subfloat[Modification of the waiting queue]
				{\includegraphics[width=0.25\textwidth]{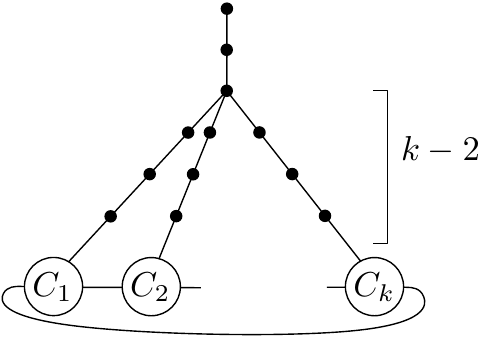}}
				\ \ \ \ \ \ \ 
				\subfloat
				[The way from a clause-vertex to a variable-vertex]
				{\includegraphics[width=0.4\textwidth]{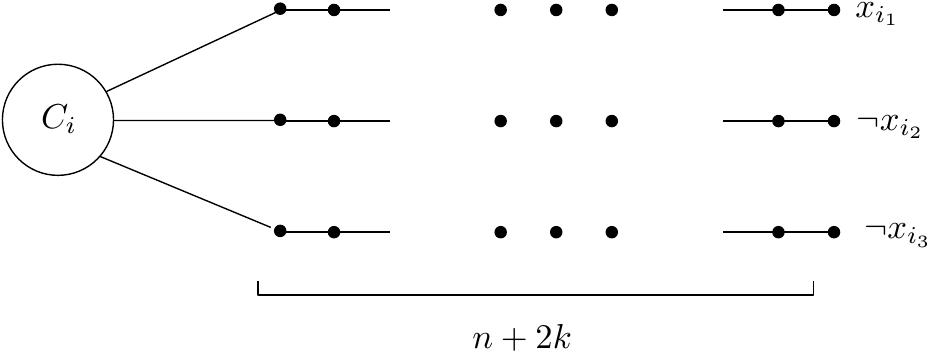}}
				
				\caption{Two of the modifications}
				\label{fig:ModWaitingQueueClauseVertex}
				\end{center}
		\end{figure}
		\begin{modification}[waiting queue]
			The next
			motion that might happen,
			is that \bob\ cuts off the waiting queue.
			We prevent this by replacing
			the waiting queue by the graph
			depicted in figure
			\ref{fig:ModWaitingQueueClauseVertex}.
		\end{modification}
		
		\begin{modification}[dummy-vertex]
			Another concern is
			that \bob\ might go towards the dummy-vertex
			and return. To hinder this we replace all
			the edges to the dummy-vertex by the
			construction in figure \ref{fig:modDummy}.
		\begin{figure}[h!]
				\begin{center}
					\includegraphics[width = 0.4\textwidth]{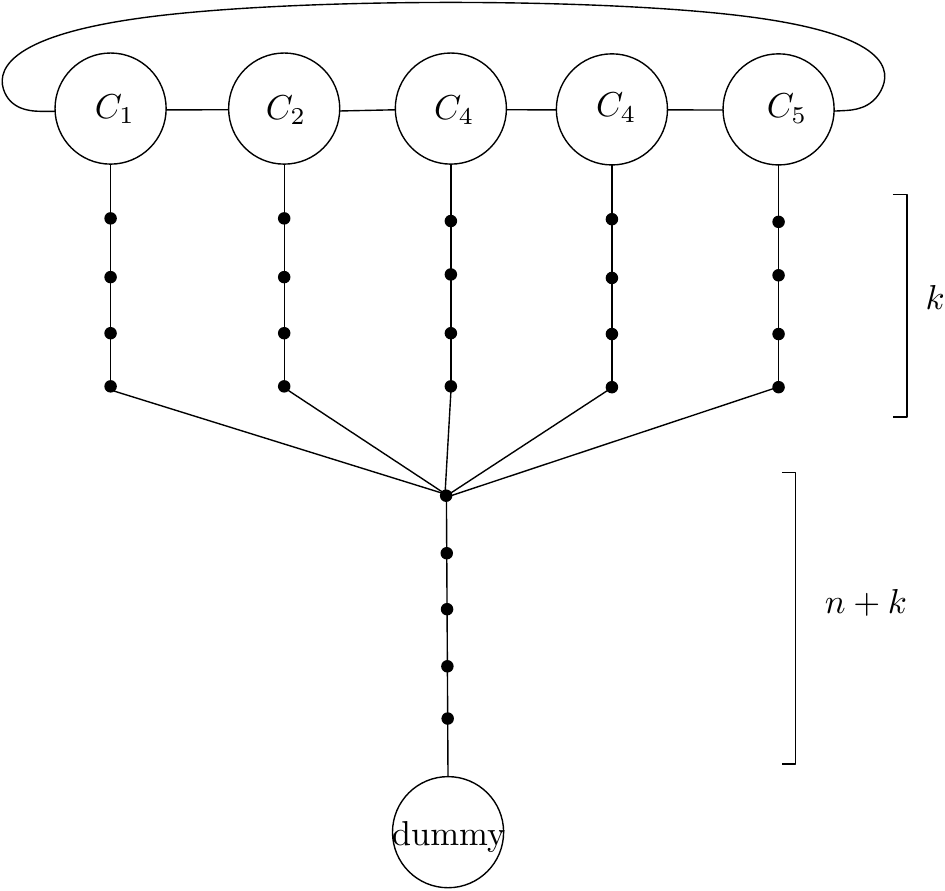}
					\caption{Modification of the paths to the dummy-vertex}
					\label{fig:modDummy}
				\end{center}
		\end{figure}
		\end{modification}

		\begin{modification}[spare-path]
			It might be advantegous for
			\bob\ to go to a literal, which is contained
			in two clauses, instead 
			of going to the dummy-vertex, because he then might
			use the return-path to a clause-gadget 
			and receive in total $4k+2n+1$
			vertices after leaving the clause-gadget. 
			We attach a path of length $2n+k$
			to each variable-vertex and the dummy-vertex.
		\end{modification}
		
		We show first, that after \alice\ and \bob\ leave
		their respective start positions, they have
		to assign the variables.
		There are only two strategies 
		they possibly could follow instead.
		The first is to use a spare-path
		from modification $4$. This gives at most
		$2n+k$ many vertices. The other player would
		just go down to the dummy-vertex and proceed
		to the spare-path from the dummy-vertex.
		Thus using the spare-path at this stage
		leads to a loss.
		The other option is to use a slow-path from a 
		variable to the clause-gadget as introduced in
		modification $1$. It takes quite a while to
		traverse this path and meanwhile, the other
		player can just go down to the
		clause-gadget, traverse all the clause-vertices
		and then go to the dummy-vertex. Again, it turns 
		out that this strategy is not a good option.
		
		So we have established that \bob\
		reaches the clause-gadget, \alice\
		reaches the waiting queue and they have assigned
		all the variables alternatingly on their way.
		Now \bob\ could make one of two plans we would not like.
		The first plan is that he might try to go to the
		dummy-vertex and return before \alice\
		has reached a clause-vertex. But the time
		to return is so long that \alice\ will have taken
		all the clause-vertices meanwhile and
		\bob\ would receive more vertices if he were
		to proceed all the way to the dummy-vertex and take the
		spare-path.
		
		The second plan he might pursue is to
		short-cut the waiting queue. 
		Luckily, the queue splits after $2$
		vertices. So when \bob\ enters the queue before 
		$2$ turns, \alice\ can avoid him by taking 
		a different branch and the planner
		himself gets trapped. If he waits $2$ turns,
		he must have determined a clause-vertex
		for \alice\ already. So \alice\ 
		knows which branch to use.
		This particular branch cannot be reached by
		\bob\ by then. So our constructions have
		circumvented his plans again.
		
		In summary we have established that \bob\  indeed
		has to traverse $k-1$ clause-vertices
		and \alice\ obviously has to go to the clause-vertex
		\bob\ left for her.
		What now?
		It is \bob s turn. One of the longest paths that remains
		goes to the dummy-vertex and proceeds via a spare-path.
		So he had better take it, because otherwise
		\alice\ will take it and he loses.
		
		Now it is \alice\ turn. If there is a variable-vertex
		she can reach, she also has a path of the same length
		as \bob\ does and this would imply that she will win. 
		If not, then she could only go towards a 
		variable-vertex and \bob\ will win.
		
		And again as in theorem \ref{dgsPSPACE} \alice\
		has a winning strategy in $G'_{\varphi}$ if and only if
		$\varphi$ is satisfiable.
	\end{proof}
		
	Now we show how to force \alice\ and \bob\
	to choose certain start positions in
	a \emph{directed} graph.
	We will do that by constructing
	a graph $H(G)$, such that \alice\
	wins in $H(G)$ if and only if 
	\alice\ wins in $G$ when both 
	players start
	at certain positions $v_1$ and $v_2$.
	It follows, that \alice\ wins 
	in $H(G_{\varphi})$ if and only if
	$\varphi$ is true. 
	With a similar but different construction,
	theorem \ref{dngsPSPACE} was shown in 
	\cite{DBLP:journals/tcs/Bodlaender93}.
	Here we give a slightly different proof again,
	because it is an essential step for our proof of
	theorem \ref{ungsPSPACE}.
	%%%%%%%%%%%%%%%%%%%%%%%%%%%%%%%%%%%%%%%%%%%%%%%%%%%%%%%%%%%%%%%%%%%%%%%%%%%%%%%%%%%%%%%%%%%%%%%%%%%%%%%%%%%%
	\begin{theorem}\label{dngsPSPACE}
		The problem to decide whether \alice\ has a winning strategy in 
		a \emph{directed} graph \emph{without} given start positions
		is PSPACE-complete.
	\end{theorem}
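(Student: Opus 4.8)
The plan is to reduce from Theorem~\ref{dgsPSPACE}. Given a \emph{directed} graph $G$ together with the intended start positions $v_1$ for \alice\ and $v_2$ for \bob, I would build a larger directed graph $H(G)$ by attaching an \emph{overhead} forcing gadget, and prove that \alice\ has a winning strategy in $H(G)$ (with free choice of start positions) if and only if she has one in $G$ when the players are pinned to $v_1$ and $v_2$. Applying this to $G_\varphi$ then yields that \alice\ wins in $H(G_\varphi)$ exactly when $\varphi$ is true, which gives PSPACE-hardness; membership follows as in Theorem~\ref{dgsPSPACE}, since the game on $H(G)$ still terminates after a number of moves linear in the size of the graph, so the game tree can be searched in linear space.

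The forcing gadget exploits the fact that the graph is \emph{directed}. I introduce two new vertices $a$ and $b$ whose only out-edges are $a\to v_1$ and $b\to v_2$; thus a player who opens on $a$ (resp.\ $b$) is \emph{committed} to entering $G$ at $v_1$ (resp.\ $v_2$) on the forced first move and, by directedness, can never backtrack afterwards. To make $a$ the unique sensible opening for \alice, I attach to the gadget large, one-way \emph{reward paths} whose lengths dominate the size of $G$, wired so that (i) if \alice\ opens anywhere other than $a$, then \bob\ may open so as to seize a dominating reward and win; (ii) once \alice\ has committed to $a$, any response of \bob\ other than $b$ lets \alice\ seize a dominating reward; and (iii) in the intended line, where \alice\ plays $a$ and \bob\ plays $b$, the rewards collected by the two players are equal. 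The directed edges are essential here: because neither player can return to a reward path after entering $G$, the rewards cancel exactly, and the outcome of $H(G)$ is decided purely by the play inside $G$. In particular the margin of Theorem~\ref{dgsPSPACE} is preserved, so $\alicewin>\bobwin$ holds in $H(G)$ precisely when it holds in $G$ from $(v_1,v_2)$.

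Combining the three properties gives the equivalence, and chaining it with Theorem~\ref{dgsPSPACE} finishes the reduction. The main obstacle is making the forcing \emph{airtight}: I must enumerate every deviation --- opening inside the reward region, diving into a reward mid-game, stalling instead of proceeding into $G$, or \bob\ mirroring \alice's reward threat --- and check that each is strictly punished, while verifying that the intended line leaves the embedded $G$-game undisturbed. Because the win/tie distinction of Theorem~\ref{dgsPSPACE} is decided by a single vertex, the reward sizes and the parity of who first enters $G$ must be balanced exactly; keeping this bookkeeping consistent across all cases is the technical heart of the argument.
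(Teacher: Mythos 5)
Your high-level plan is the same as the paper's: build an overhead graph $H(G)$ around $G$ so that free choice of start positions collapses to the pinned game at $(v_1,v_2)$, apply it to $G_\varphi$ from theorem \ref{dgsPSPACE}, and get membership from the linear game length. But your concrete gadget has a genuine flaw that breaks your own property (ii). You make $a\to v_1$ the \emph{only} out-edge of $a$, so the instant \alice\ opens at $a$ she is sealed inside the $G$-component and has no access whatsoever to any reward path. Consequently nothing punishes \bob\ for ignoring $b$: he simply opens at the top of a reward path, walks down it, and collects a reward that by your own stipulation ``dominates the size of $G$,'' while \alice\ collects at most $\#V(G)$ vertices inside $G$ (and possibly far fewer --- you add no padding, so there is no lower bound on her haul). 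So \bob\ wins after every opening by \alice, the intended line is never played, and the equivalence with the pinned game fails. The same committal design also undermines (i): deterrence has to be \emph{mutual}, and a player who has already committed can never issue a threat.

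The paper's construction shows what is actually needed, and it is exactly what your design forgoes. Its opening vertices $s_1,s_2$ are \emph{not} committal: $s_1$ is adjacent to $u_1$ \emph{and} to the tops of two auxiliary (reward) paths, while $s_2$ is adjacent to $u_2$ and to the \emph{middles} of those same paths, so each player retains a reward threat until the other has behaved. Two further ingredients make the bookkeeping work: equal directed padding paths of length $2n$ from $u_1,u_2$ to $v_1,v_2$, which guarantee any player who does enter the $G$-component at least $l_{low}=2n$ vertices (this is the lower bound your argument lacks), and the balancing condition $l_{up}-l_{low}+2 < l_{low}-1$, which makes the auxiliary segment reachable from $s_2$ strictly smaller than that guaranteed haul, so that diving into a reward is strictly worse than entering $G$. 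With those pieces the case analysis (deviant openings in $G$, in an auxiliary path, or at the wrong $s_i$) closes; without them it cannot.
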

	\begin{proof}
		Assume we are given some directed graph
		$G$ and two vertices $v_1, v_2 \in G$.
		We construct some directed Graph $H(G)$
		such that \alice\ wins in $H(G)$ if and only
		if \alice\ wins in $G$ with the predefined
		start-positions $v_1,v_2$.
		Applied to $G_{\varphi}$, this finishes the proof.
		
		The general idea of such an overhead graph
		is simple. We construct two vertices,
		which are very powerful, so \alice\
		and \bob\ want to start there, but once there,
		they are forced to go to the start-vertices 
		of the original graph. The idea is also used
		in theorem \ref{ungsPSPACE}.
		
		We describe the construction of $H(G)$ as depicted 
		in figure \ref{fig:overDirected} in detail:
		
		\noindent We add two vertices $u_1$ and $u_2$
		with attached directed paths of
		length $2n = 2 \# V(G)$ to the start-vertices 
		$v_1$ and $v_2$  respectively.
		Now the longest path starts at $u_1$
		or $u_2$ and has length between $2n$
		and $3n$.
		Let $l_{low} = 2n$ denote a lower bound on the length $l$
		of the longest path in $G$ and $l_{up} =3n$ an
		upper bound on $l$.

		\begin{figure}[h]
				\begin{center}
					\includegraphics[width = 0.55\textwidth]{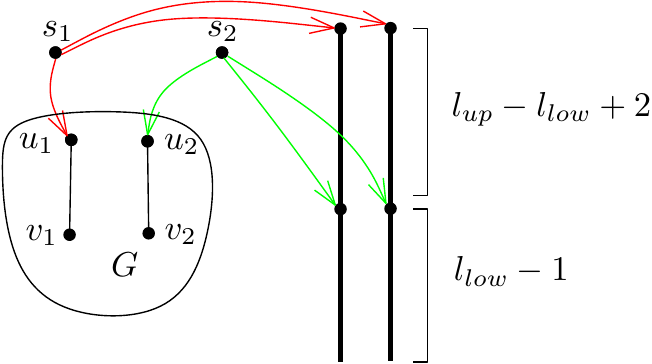}
					\caption{H(G)} 
					\label{fig:overDirected}
				\end{center}
		\end{figure}
		
		Then we add	two directed \emph{auxiliary-paths} 
		of length $l_{up} +1$
		and vertices $s_1$ and $s_2$.
		The vertex $s_1$ is attached to $u_1$
		and the two upper parts of the auxiliary-paths.
		The vertex $s_2$ is connected to
		$u_2$ and the middle parts of auxiliary-paths,
		such that the larger part is below,
		but the larger part is still shorter
		than the longest path from $u_1$ or $u_2$. 
		This is possible as long as 
		$l_{up} - l_{low} + 2 < l_{low}-1$, which is the case.
		We want to point out that
		once one of the players is in an
		auxiliary-path or $G$, there is no way 
		out of the respective component
		simply because there is no outgoing 
		edge.
		
		Assume \alice\ starts at $s_1$
		and \bob\ at $s_2$. Then 
		\alice\ should go to $u_1$
		because otherwise \bob\
		will go into the same auxiliary-path as her
		and receive more 
		vertices than \alice\ and thus she loses.
		Meanwhile, \bob\ should go to $u_2$
		on his first turn, as he would
		receive fewer vertices in an 
		auxiliary-path than \alice\ in $G$.
		Now we show, that
		it is best for \alice\ to start
		at $s_1$. 
		
		\noindent
		\texttt{Case 1}
			\alice\ starts in $G$ Then \bob\ just
			starts at the top of an auxiliary-path.
		
		\noindent
		\texttt{Case 2} 
			\alice\ starts in an auxiliary-path. 
			As the path is directed,
			\bob\ starts in front of her.
			
		\noindent
		\texttt{Case 3}
			\alice\ starts in $s_2$. Then \bob\ starts in $s_1$.
			Now \bob\ can get $l_{up}+2$ in total and \alice\
			at most $l_{up}+1$.
		
		Thus \alice\ is better off starting at $s_1$,
		or she will lose anyway.
		We show now that under these conditions, \bob\
		is always better off starting at $s_2$.
		
		\noindent
		\texttt{Case 4} \bob\ starts in an auxiliary-path.
			\alice\ will go to the other auxiliary-path and win.
		
		\noindent
		\texttt{Case 5} \bob\ starts in $G$. \alice\
		will then just go to an auxiliary-path.
	\end{proof}
	%%%%%%%%%%%%%%%%%%%%%%%%%%%%%%%%%%%%%%%%%%%%%%%%%%%%%%%%%%%%%%%%%%%%%%%%%%%%%%%%%%%%%%%%%%%%%%%%%%
	Now the last task is to show the result if
	the graph is \emph{undirected} and the starting
	positions are \emph{not} given.
	We will do that by using the graph 
	$G_\varphi'$ and an undirected version of
	$H(G)$ which we will denote by $H'(G)$. 
	Unfortunately this will not work immediately.
	We will therefore construct an overhead 
	graph $F(H'(G))$ using some properties of
	$H'(G)$.
	
	\begin{theorem}\label{ungsPSPACE}
		The problem to decide whether \alice\ has a winning strategy in 
		a \emph{undirected} graph \emph{without} given start positions
		is PSPACE-complete.
	\end{theorem}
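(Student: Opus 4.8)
The plan is to combine the two reductions already built: the undirected graph $G_\varphi'$ from theorem \ref{ugsPSPACE}, which realizes $\varphi$ as a \tron\ instance \emph{with} the fixed start vertices $v_1,v_2$, and an undirected forcing overhead in the spirit of $H(G)$ from theorem \ref{dngsPSPACE}, whose job is to coerce \alice\ and \bob\ into starting exactly at $v_1$ and $v_2$. Concretely, first I would take the directed overhead $H(G)$ and undirect every edge to obtain $H'(G)$, keeping the same vertices $s_1,s_2,u_1,u_2$, the length-$2n$ connector paths to $v_1,v_2$, and the two auxiliary-paths, with all lengths tuned as before. The intended play is unchanged: \alice\ starts at $s_1$, \bob\ at $s_2$, both descend through $u_1,u_2$ into $G$ at $v_1,v_2$, and from there theorem \ref{ugsPSPACE} takes over, so that \alice\ wins if and only if $\varphi$ is true.

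The difficulty is that undirecting the edges destroys exactly the property that made the directed argument clean: in $H(G)$ there is no way out of an auxiliary-path or of $G$ once entered, whereas in $H'(G)$ a player may now walk backwards along a connector path, reverse on an auxiliary-path, or re-emerge from $G$ into the overhead and harvest vertices there. So the single overhead $H'(G)$ no longer forces the start positions by itself. My plan is therefore to record the properties of $H'(G)$ that \emph{do} survive the undirection --- essentially, that conditioned on \alice\ entering at $s_1$ and \bob\ at $s_2$ and both committing to the descent, the game value equals that of the given-start game on $G_\varphi'$, because symmetric timing together with the tuned auxiliary-path lengths makes any reversal a strict net loss --- and to isolate the one thing $H'(G)$ cannot guarantee on its own, namely the correct initial commitment.

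I would then wrap $H'(G)$ in a further overhead $F(H'(G))$ whose only purpose is to enforce that initial commitment, reusing recursively the ``two very powerful vertices'' idea of theorem \ref{dngsPSPACE}: a fresh attractive pair together with short tie-breaking auxiliary-paths, calibrated so that the unique rational opening is for \alice\ to take the branch leading to $s_1$ and for \bob\ the branch leading to $s_2$, and so that any attempt to start elsewhere, to reverse direction, or to re-enter an already consumed region is punished by the length bookkeeping. The verification is the usual exhaustive case analysis over all of \alice's possible starts and first moves and \bob's responses (mirroring the \texttt{Case 1}--\texttt{Case 5} structure of theorem \ref{dngsPSPACE}), now carried out for undirected moves, i.e.\ also ruling out every backtracking line.

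Finally, PSPACE-membership is immediate and identical to the earlier theorems: no vertex is reused, so every play of \tron\ on $F(H'(G_\varphi'))$ lasts a linear number of turns and the game tree has linear depth, hence can be searched in polynomial space. Putting the pieces together, \alice\ has a winning strategy in the undirected, no-given-start instance $F(H'(G_\varphi'))$ if and only if $\varphi$ is true, which gives PSPACE-hardness and completes the proof. I expect the main obstacle to be precisely the undirected backtracking: choosing all path lengths in $H'$ and $F$ simultaneously so that every reversal and every re-entry is strictly suboptimal, and then discharging the correspondingly enlarged case analysis.
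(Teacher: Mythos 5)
Your reduction chain ($G_\varphi'$ from theorem \ref{ugsPSPACE}, an undirected $H'(G)$, then an outer forcing layer $F$) matches the paper's skeleton, and your diagnosis of the difficulty (undirecting destroys the trapping property of $H(G)$) is the right one. The gap is in what you propose $F$ to be. You build $F$ by ``reusing recursively the two-powerful-vertices idea'': a fresh attractive pair plus calibrated tie-breaking auxiliary paths. That gadget is exactly the undirected version of the construction from theorem \ref{dngsPSPACE}, and it fails at the outer level for the same reason it fails at the inner level. The directed case analysis leans on directedness essentially: in \texttt{Case 2} of theorem \ref{dngsPSPACE}, a player starting inside an auxiliary path is beaten by the opponent starting \emph{in front of her}, which is meaningless once edges are undirected (she turns around); and in the contested-start situation the opponent can simply park on, or next to, the cut vertex $s_1$ through which your intended descent is forced. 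Once \bob\ parks there, \alice\ is locked out of $G_\varphi'$, so the outcome of that line is decided by fixed path lengths and is therefore \emph{independent of whether $\varphi$ is true} --- yet correctness of the reduction requires \alice\ to win this line exactly when $\varphi$ is true. If you make \alice 's compensation paths long enough that she always wins it, those paths become the most attractive objects in the graph and the intended play collapses; if you make them shorter, \bob\ profits from parking even when $\varphi$ is true. No calibration of fixed-length structures escapes this dilemma: the compensation for being locked out must itself have value tied to the game value of $G_\varphi'$.

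That is precisely what the paper's construction supplies and your proposal lacks: $F(H'(G))$ consists of \emph{two disjoint copies} $H^a$, $H^b$ of $H'(G)$, joined by only two new vertices $t_1,t_2$ and a handful of edges among $t_1,t_2,s_1^a,s_2^a,s_1^b,s_2^b$. The second copy is the $\varphi$-dependent compensation: if \bob\ contests the entry $s_1^a$, \alice\ answers with $s_1^b$ and \emph{mirrors} his moves in the other copy (the paper's \texttt{Case 2}), or she imprisons him between $s_1^a$ and $s_2^a$ --- possible because the shortest $s_1$--$s_2$ path has length at least $3$ (property p5) --- and then wins alone in the untouched copy (property p3). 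These mirroring and imprison-then-escape strategies are impossible with a single copy, and they, not length bookkeeping, are what carries the undirected case analysis. So your plan as stated would not go through; the missing idea is the duplication of $H'(G)$.
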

	\begin{proof}
		The general idea of this construction
		is the same as in the previous proof,
		but because we build up from
		the construction from theorem \ref{dngsPSPACE},
		everything gets more involved.
		Every single argument is still elementary.
		
		Let $H'(G)$ be the graph $H(G)$ with
		all directed edges replaced
		by undirected ones. 
		Also the auxiliary paths have to
		be changed slightly, because $l_{low} = 4n$ 
		and $l_{up} = 5n$. 
		We observe $5$ properties of this $H'(G)$:
		\begin{enumerate}
			\item[p1] If \alice\ starts at $s_1$ and \bob\
				starts at $s_2$, then \alice\ has to go 
				to $u_1$ and \bob\ to $u_2$.
			\item[p2] If \alice\ starts at $s_2$ and \bob\
				at $s_1$, \bob\ will win.
			\item[p3] If we assume $s_1$ and $s_2$ are forbidden
				to use, except when started at, it holds that
				the longest path starts at $s_1$.
				(longest path in the sense
				that we consider only one player.)
			\item[p4] Any path from $s_1$ to $s_2$ can
				be extended using an auxiliary-path.
			\item[p5] The shortest path from
			$s_1$ to $s_2$ has length at least $3$.
		\end{enumerate}
		
			Properties p1 and  p2 hold for directed
			graphs according to the proof of theorem \ref{dngsPSPACE},
			and hold by the same arguments
			for the undirected case.
			Property p3 is clear by the definition
			of the auxiliary paths. p4 
			is clear because any path from
			$s_1$ to $s_2$ uses at most one
			auxiliary path. Thus the path can be extended
			to an auxiliary path that has
			not been used yet. To p5 we
			remark that we consider only sufficiently large
			$n$.
		
		\begin{figure}[h!]
				\begin{center}
					\includegraphics[width = 0.4\textwidth]{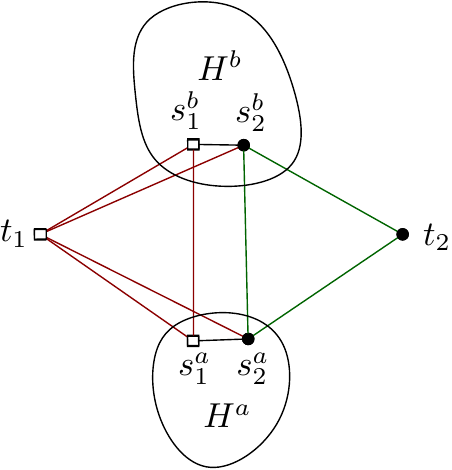}
					\caption{$F(H'(G))$} 
					\label{fig:overUnDirected}
				\end{center}
		\end{figure}
		
		We construct an overhead graph of $H'(G)$, namely
		$F(H'(G))$, as depicted in figure \ref{fig:overUnDirected}.
		It consists of two copies of $H'(G)$, which we call $H^a$
		and $H^b$. In addition two 
		vertices $t_1$ and $t_2$. 
		We indicate with an upper index $^a$ or $^b$ whether a vertex belongs to $H^a$ 
		or $H^b$.
		We will always go w.l.o.g. to $H^a$ instead of $H^b$
		when the situation is symmetric.
		The edge-set consists of all the edges
		in $H^a$, $H^b$ and 
		$(t_1,s^a_1)$,$(t_1,s^b_1)$,$(t_1,s^a_2)$,$(t_1,s^b_2)$,
		$(t_2,s^a_2)$,$(t_2,s^b_2)$,$(s_1^a,s^a_2)$,$(s_1^b,s^b_2)$,$(s^a_1,s^b_1 )$,$(s^a_2,s^b_2 )$.
		
		We call $t_2, s_2^a$ and $s_2^b$ \emph{dot-vertices}
		and $t_1, s_1^a$ and $s_1^b$ \emph{box-vertices}.
		
		First we will show, that if \alice\ wins in $G$ with
		start vertices $v_1$ and $v_2$, then \alice\ will win
		in $F(H'(G))$. 
		This means that we assume that \alice\ has
		a winning strategy in $G$ with 
		the respective start-vertices.
		We give an explicit winning strategy.
		\alice\ starts at $t_1$.

		\noindent
		\texttt{Case 1} 
			\bob\ starts inside $H^a$(i.e. not in $s_1^a$ or $s_2^a$). 
			\bob\ is closer to either $s_1^a$ or $s_2^a$.
			Both can be reached by $t_1$.
			So \alice\ can imprison \bob\ by going
			to the closer vertex and then to the other vertex.
			\bob\ cannot escape, because of p5.
			After that \alice\ can go to $H^b$ and wins there
			by p3.
		
		\noindent
		\texttt{Case 2}
			\bob\ starts at $s^a_1$. \alice\ takes $s^b_1$.
			Then \alice\ copies every move of \bob\ and thus 
			wins, since the only move she cannot copy is
			to $t_2$. But p4 shows us that this is not a wise
			move of \bob .
		
		\noindent	
		\texttt{Case 3} \bob\ starts at $s^a_2$.
			\alice\ goes to $s^a_1$ and $s^b_1$ in this order.
			By then \bob\ is either in 
			$H^a$, where he will lose by
			p3, or he is in $H^b$ and will
			lose by p2, or he will be at $t_2$
			and cannot move, or he is at $s^b_2$
			and will lose by p1 and the assumption.
			
		\noindent
		\texttt{Case 4} \bob\ starts at $t_2$.
			\alice\ will go to $s^a_1$, $s^b_1$
			and then enter $H^b$. \bob\ can either
			enter $H^b$ one turn before \alice\
			and lose by p2. 
			Or he enters $H^b$ one turn after 
			\alice\ and lose by p1 an the assumption.
			Or he enters $H^a$ and loses by p3.

			So far we have shown, that if \alice\ wins in $G$
			she does so in $F(H'(G))$. We will proceed by showing,
			that if \bob\ can achieve at least a tie in $G$, 
			so can he in $F(H'(G))$.
			
		\noindent
		\texttt{Case 5a}	
			\alice\ starts at $t_1$. \bob\ goes to $t_2$.
			Let us say \alice\ goes to $s^a_1$, then \bob\
			will follow her with $s^a_2$. Now if \alice\ enters
			$H^a$, he will as well.
			Otherwise she has to go to
			$s^b_1$. He then enters at $s^b_2$ and gets at least
			a tie. (by assumption and p1)

		\noindent
		\texttt{Case 5b}
			\alice\ starts in $t_1$. \bob\ goes to $t_2$.
			Let us say \alice\ chooses $s^a_2$ as her second
			move. \bob\ can go then to $s^b_2$
			and imitate all of \alice 's moves and thus 
			gets a tie. (see \texttt{Case 2})

		\noindent
		\texttt{Case 6} 
			\alice\ starts inside $H^a$
			(i.e. not $s^a_1$ or $s^b_2$). 
			\bob\ cuts her off and enters
			the other copy via $s^b_1$.
			(p3 and p5, see \texttt{Case 1}
		
		\noindent
		\texttt{Case 7a}
			\alice\ starts at $s^a_1$. Then \bob\ will
			start at $t_1$. Let us say that 
			\alice\ goes to $s^b_1$.
			\bob\ will go to  
			$s^b_2$. Now both have to enter
			$H^b$ and \bob\ acquires at least a tie
			by assumption and p1.

		\noindent
		\texttt{Case 7b} 
			\alice\ starts at $s^a_1$. Then \bob\ will
			start at $t_1$. Let us say that 
			\alice\ goes this time to $s^a_2$.
			\bob\ will than go to $s^b_2$.
			Thus \alice\ has to enter $H^a$ and
			\bob\ can enter $H^b$ via $s^b_1$ and
			thus wins by p3.

		\noindent
		\texttt{Case 8a} 
			\alice\ starts on $s^a_2$. Then \bob\ will
			start at $t_1$. Now if \alice\ goes to 
			$s^a_1$ \bob\ can go to $s^b_1$ and imitate her moves
			as in \texttt{Case 6}. Here he has even more
			options than \alice .

		\noindent
		\texttt{Case 8b} 
			\alice starts at $s^a_2$. Then \bob\ will
			start iat $t_1$. This time we assume
			\alice\ goes to 
			 $s^b_2$, \bob\ takes $s^b_1$. Then \alice\
			can make a last move to $t_2$ or 
			enter $H^b$. In the second case \bob\ goes
			to  $H^a$ via
			$s^a_1$ and wins by p4.
		
		\noindent
		\texttt{Case 9} 
			\alice\ starts at $t_2$.
			\bob\ goes to $t_1$ and follows her
			in the sense that if she goes to
			$s^a_2$, he will go to $s^a_1$.
			Thus either \alice\ enters
		  $H^a$ via $s^a_2$ and \bob\ will enter
		  $H^a$ via $s^a_1$ and thus wins by p2,
		  or the same happens with $H^b$ one turn later.

	\end{proof}

\section*{Acknowledgments}
I want to thank Justin Iwerks for
suggesting this research topic and
initial discussions. For proofreading 
and general advice I want to thank
Wolfgang Mulzer, Lothar Narins and Tobias Keil. 
For the visage on the right-hand 
side of figure \ref{fig:VisageTillTorsten}
I thank Torsten Ueckerdt.

\newpage

\bibliography{TillLib}
\bibliographystyle{plain}
	
\end{document}